\theoremstyle{definition}
\newtheorem{theorem}{Theorem}
\newtheorem{lemma}{Lemma}
\newtheorem{proposition}{Proposition}
\begin{document}

\title{Rethinking the fundamental performance limits of integrated sensing and communication systems}
\author{\emph{Author}}
\author{\IEEEauthorblockN{Zhouyuan Yu, Xiaoling Hu, {\em Member, IEEE},  Chenxi Liu, {\em Senior Member, IEEE}, and Mugen Peng, {\em Fellow, IEEE}
\thanks{Z. Yu, X. Hu, C. Liu, and M. Peng  are with the State Key Laboratory of Networking and Switching Technology, Beijing University of Posts and
Telecommunications, Beijing 100876, China (e-mail: \{zhouyuanyu, xiaolinghu, chenxi.liu, pmg\}@bupt.edu.cn).}}}

\maketitle\thispagestyle{headings}
\setcounter{page}{1}
\begin{abstract}
Integrated sensing and communication (ISAC) has been recognized as a key enabler and feature of future wireless networks. In the existing works analyzing the performances of ISAC, discrete-time systems were commonly assumed, which, however, overlooked the impacts of temporal, spectral, and spatial properties. To address this issue, we establish a unified information model for the band-limited continuous-time ISAC systems. In the established information model, we employ a novel sensing performance metric, called the sensing mutual information (SMI). Through analysis, we show how the SMI can be utilized as a bridge between the mutual information domain and the mean squared error (MSE) domain. In addition, we illustrate the communication mutual information (CMI)-SMI and CMI-MSE regions to identify the performance bounds of ISAC systems in practical settings and reveal the trade-off between communication and sensing performances. Moreover, via analysis and numerical results, we provide two valuable insights into the design of novel ISAC-enabled systems: i) communication prefers the waveforms of random amplitude, sensing prefers the waveforms of constant amplitude, both communication and sensing favor the waveforms of low correlations with random phases; ii) There exists a linear positive proportional relationship between the allocated time-frequency resource and the achieved communication rate/sensing MSE.

\end{abstract}

\begin{IEEEkeywords}
Integrated sensing and communication, band-limited continuous-time system, fundamental limits.
\end{IEEEkeywords}

\section{Introduction}
\par Next-generation networks are anticipated to provide massive wireless connectivity and high-precision wireless sensing capability for supporting numerous emerging applications such as smart factory industrial, extended reality, and vehicular networks\cite{LiuIntegrated2022, ZhangOverview}. For fulfilling this requirement, integrated sensing and communication (ISAC) is envisioned as a pivotal enabler where communication and sensing functionalities are co-designed to share hardware platform, frequency band, as well as signal processing modules, thus providing unprecedented 
synergy and integration gain\cite{LiuSeventy2023, ChiriyathRadar, FengJoint2020}. Owing to the immense potential of ISAC, its fundamental limits, which are of profound importance in guiding the design and theoretical analysis of practical ISAC systems, have recently sparked a surge in research attention and endeavors\cite{LuIntegrated}.

\par Generally speaking, communication is to recover data symbols embedded in transmitted signals from received signals, whose performance is typically evaluated by information-theoretic metrics including channel capacity, spectral/energy efficiency, and outage probability. By contrast, sensing is to extract information of interest about sensed objects from collected echo signals, whose performance is assessed through classical estimation-theoretic metrics including detection probability, false-alarm probability, mean squared error (MSE), as well as Cramér-Rao Bound (CRB)\cite{LiuSurvey, OuyangIntegrated}. By leveraging these well-defined basic metrics, some research efforts have been conducted to investigate the communication and sensing performance limits in ISAC \cite{ChalisePerformance2017, ChalisePerformance2018, XiaoWaveform2022, XiongFlowing, HuaGLOBECOM, RenFundamental, LiuCram, HuaMIMO, AnFundamental2023}. For instance, Chalise  \emph{et al}. analyzed the trade-off between the detection probability and the communication rate for a joint communication and passive radar system in \cite{ChalisePerformance2017}, and subsequently extended the analysis to a multi-static passive radar-communication system in \cite{ChalisePerformance2018}. In \cite{XiaoWaveform2022}, the authors derived the detection probability and ambiguity function for sensing as well as the symbol error rate and spectrum efficiency for communication, which were utilized to demonstrate the outperformance of the proposed full-duplex waveform. For characterizing the trade-off between target estimation and communication, the work \cite{XiongFlowing} obtained the CRB-minimization and rate-maximization points on the CRB-rate region in a point-to-point MIMO ISAC system. Later on, \cite{HuaGLOBECOM} and \cite{RenFundamental} further advanced the study to the extended target scenario and multicast multi-target scenario, respectively, and characterized the Pareto-optimal boundary of CRB-rate region. 

\par In addition to investigating through well-defined basic metrics, recent works have also been dedicated to connecting native communication and sensing metrics as well as constructing unified theoretical frameworks aimed at unveiling more insights into the fundamental limits of ISAC systems \cite{DongningMutual, TangMIMO2010, AhmadipourInformation, AhmadipourJoint, AhmadipourSystems}. From the information-theoretical perspective, the early seminal research \cite{DongningMutual} connected the communication metric of mutual information (MI) and the sensing metric of minimum MSE (MMSE) by the proposed well-known I-MMSE equation, which states that the derivative of MI with respect to the signal-to-noise ratio (SNR) equals half of the MMSE. This equation reveals that communication and sensing exhibit consistency in SNR, but conflict in determinism-randomness. Ahmadipour \emph{et al}. established a unified capacity-distortion performance metric for state-dependent memoryless channels with generalized feedback. Investigations showed that the capacity-distortion trade-off arises from a common choice of the waveform rather than other properties of the utilized codes \cite{AhmadipourInformation, AhmadipourJoint}. Besides, some new information metrics for sensing were defined to facilitate the trade-off with conventional communication metrics, e.g., the radar capacity that characterizes the number of identifiable targets\cite{GuerciJoint}, the radar estimation rate that describes the cancellation of the target parameters uncertainty per second \cite{ChiriyathInner, BlissCooperative}, as well as the sensing MI (SMI) that is defined as the MI between the received echo signals and target parameters \cite{YangMIMOradar, LiuDeterministic-Random}. Similarly, from the estimation-theoretical perspective, the authors in \cite{KumariPerformance} bridged the communication rate with the estimation-theoretic metric and proposed the communication MSE, which can be used to measure the average MSE of the ISAC system. In \cite{YuNon-Orthogonal}, the authors derived the communication CRB for an intelligent reflecting surface (IRS) assisted ISAC system, and revealed the inherent connection between the proposed communication CRB and traditional communication MI (CMI).

\par Nevertheless, the prevailing researches on the fundamental limits of ISAC consider only the discrete-time systems, in which the discrete Gaussian channel model is employed and the theoretical performance is portrayed using various communication-sensing metric pairs. However, this approach fails to capture the temporal, spectral, and spatial characteristics of practical band-limited continuous-time systems, leaving the impact of time-frequency-spatial resources on the performance bound and communication-sensing trade-off far from being well investigated. Motivated by the above, in this paper, we extend the discrete-time ISAC system to a more practical band-limited continuous-time ISAC system, and establish a unified ISAC information model that incorporates time, frequency, and spatial domain features. The main contributions of this work are summarized as follows:
\begin{itemize}
        \item Combining the information theory and the Nyquist sampling theorem, we develop an information model for band-limited discrete-time ISAC systems, which captures the features of both time, frequency, and spatial domains. Under such a framework, we derive the SMI for characterizing the sensing performance from the information-theoretical perspective, and reveal the inner connection between the SMI and the traditional estimation-theoretic metric MSE. Based on the proposed metrics, we develop the CMI-SMI and CMI-MSE regions to investigate the communication-sensing performance boundary and fundamental trade-off, thereby providing useful guidelines and insights for the time-frequency resource allocation and the waveform design in practical band-limited discrete-time ISAC systems.

        \item Through analytical and numerical investigation of the ISAC waveform design, we demonstrate that the communication and sensing functionalities exhibit opposite requirements for waveform amplitude, while converging in their demand for waveform correlation. Regarding waveform amplitude, the communication functionality prefers a random amplitude to convey more information, while the sensing functionality prefers a constant-modulus waveform to guarantee a stable parameter estimation. Regarding waveform correlation, both functionalities prefer a low-correlation waveform with random phases, which not only improves communication efficiency but also provides more independent measurements of sensing parameters.

        \item By characterizing the CMI-SMI and CMI-MSE regions, we reveal the impact of time-frequency resource allocation on the communication-sensing trade-off. Specifically, when the time-frequency resources allocated to communication functionality are doubled, the amount of obtained communication information also doubles. In contrast, doubling these resources for sensing functionality induces a $50\%$ reduction in the MSE.
        
\end{itemize}

\par The remainder of this paper is organized as follows. Section~\ref{section2} introduces the system model of the band-limited continuous-time ISAC system. Section~\ref{section3} presents the performance analysis of the ISAC system, in which the communication-sensing performance metrics are derived and two communication-sensing performance regions are proposed. Section~\ref{section4} provides discussions on the ISAC waveform design and time-frequency-spatial resource allocation. Numerical results are provided in Section~\ref{section5} to verify our analysis. Finally, Section~\ref{section6} concludes this paper.

\emph{{Notations:}} Throughout this paper, the boldface upper/lower case represents matrices/vectors. $( \cdot ) ^{\mathrm{T}}$ and $( \cdot ) ^{\mathrm{H}}$ stand for transpose and Hermitian transpose, respectively.  $\mathbb{E} \{ \cdot \} $ denotes the expected value function. $\mathcal{C} \mathcal{N} ( 0,\sigma ^2 ) $ denotes the complex Gaussian distribution with mean $0$ and variance $\sigma ^2$.  For matrices,
$[ \cdot ] _{ij}$ represents the $(i,j)$-th element, $\mathrm{vec}\left( \mathbf{A} \right) $ denotes the column-stacked vector of $\mathbf{A}$, $\mathbf{A}\otimes \mathbf{B}$ represents the Kronecker product between $\mathbf{A}$ and $\mathbf{B}$. For vectors, $[ \cdot ] _i$ denotes the $i$-th entry. In addition, $\mathrm{card}\left( \mathcal{A} \right) $ is the cardinal number of the finite set $\mathcal{A} $.
\section{System Model}\label{section2}
\begin{figure}[htb]
  \centering
  \includegraphics[width=3.2in]{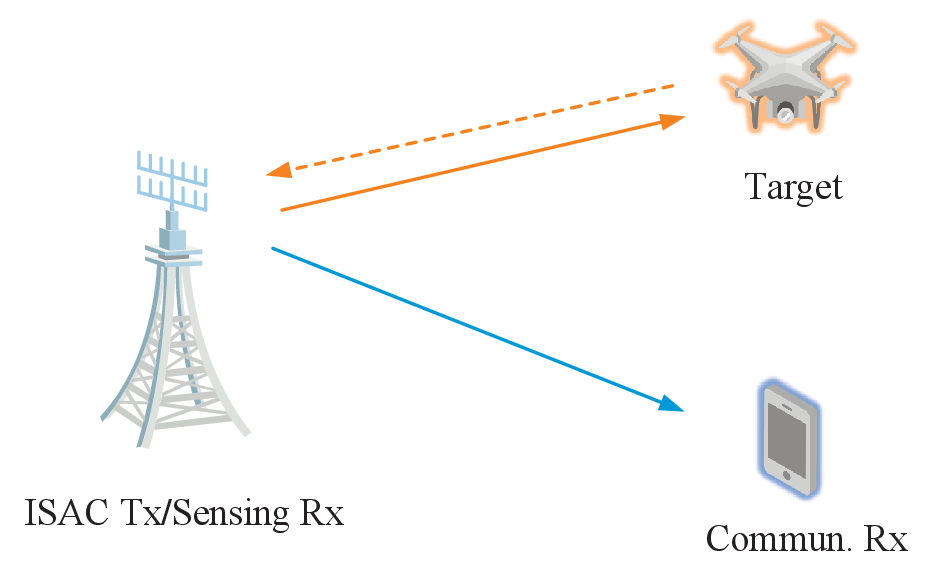}
  \caption{The ISAC scenario considered in this paper.}
  \label{scenario}
\end{figure}
We consider an ISAC system as shown in Fig.~{\ref{scenario}}, in which an ISAC transmitter (Tx) emits ISAC signals to conduct downlink communication and target sensing, simultaneously. The number of antennas at the ISAC Tx, communication receiver (Rx), and sensing Rx are $N$, $M_{\mathrm{c}}$, and $M_{\mathrm{s}}$, respectively. We consider a coherent processing interval (CPI) with $N_{\text{CPI}}$ ISAC symbols. Moreover, sensing parameters and communication channels remain constant during each CPI. 
\subsection{Signal Model}
\par Based on the frequency domain sampling theorem, each ISAC symbol with the bandwidth of $B$ and the duration of $T=1/B$ can be uniquely recovered from its $2B$ frequency-domain samples \cite{tse2005fundamentals}. Hence, the $N_{\text{CPI}}$ ISAC symbols transmitted from the $n$-th antenna of ISAC Tx during one CPI can be described by
\begin{align}
&\tilde{\mathbf{x}}\left( n \right) =
\\
&\left[ \check{\mathbf{x}}_{1}^{\mathrm{T}}\left( n \right) ,\cdots ,\check{\mathbf{x}}_{m}^{\mathrm{T}}\left( n \right) ,\cdots \check{\mathbf{x}}_{N_{\mathrm{CPI}}}^{\mathrm{T}}\left( n \right) \right] ^{\mathrm{T}}\in \mathbb{C} ^{2BN_{\mathrm{CPI}}\times 1},\notag
\end{align}
where $\mathbf{x}_m\left( n \right) \in \mathbb{C} ^{2B\times 1}$ denotes the frequency-domain sample vector of the $m$-th ISAC symbol. 
The ISAC symbols transmitted via $N$ transmit antennas can be characterized by a $2BN_{\mathrm{CPI}}\times N$ matrix as
\begin{align}
\mathbf{X}=\left[ \tilde{\mathbf{x}}\left( 1 \right) ,\cdots ,\tilde{\mathbf{x}}\left( n \right) ,\cdots ,\tilde{\mathbf{x}}\left( N \right) \right] \in \mathbb{C} ^{2BN_{\mathrm{CPI}}\times N}.
\end{align}
\par The receive symbol matrix at the communication Rx can be characterized by 
\begin{align}
\mathbf{Y}_{\mathrm{c}}=\mathbf{XH}_{\mathrm{c}}+\mathbf{N}_{\mathrm{c}}\in \mathbb{C} ^{2BN_{\mathrm{CPI}}\times M_{\mathrm{c}}},
\end{align}
where $\mathbf{H}_{\mathrm{c}}\in \mathbb{C} ^{N\times M_{\mathrm{c}}}$ denotes the communication channel matrix, and $\mathbf{N}_{\mathrm{c}}\in \mathbb{C} ^{2BN_{\mathrm{CPI}}\times M_{\mathrm{c}}}$ is the additive white Gaussian noise (AWGN), whose elements follow the complex Gaussian distribution $\mathcal{C} \mathcal{N} \left( 0,\sigma _{\mathrm{nc}}^{2} \right) $. 
\par Also, the target echo symbol matrix at the sensing Rx is characterized by 
\begin{align}
\mathbf{Y}_{\mathrm{s}}=\mathbf{XH}_{\mathrm{s}}\left( \mathbf{s} \right) +\mathbf{N}_{\mathrm{s}}\in \mathbb{C} ^{2BN_{\mathrm{CPI}}\times M_{\mathrm{s}}},
\end{align}
where $\mathbf{H}_{\mathrm{s}}\left( \mathbf{s} \right) \in \mathbb{C} ^{N\times M_{\mathrm{s}}}$ represents the sensing channel matrix, which involves the sensing parameters $\mathbf{s}$ to be estimated, and $\mathbf{N}_{\mathrm{s}}\in \mathbb{C} ^{2BN_{\mathrm{CPI}}\times M_{\mathrm{s}}}$ is the AWGN, whose elements follow $\mathcal{C} \mathcal{N} \left( 0,\sigma _{\mathrm{ns}}^{2} \right) $. In addition, $\mathbf{s}\in \mathbb{R} ^{K\times 1}$ with $K$ being the parameter dimension.

\section{Communication-Sensing Performance Analysis} \label{section3}
In this section, we analyze the communication-sensing performance of the ISAC system in a unified manner. Specifically, we first characterize the communication performance with the CMI. Then, to unify the performance metric, we characterize the sensing performance from the information-theoretical perspective and propose the SMI. Besides, we reveal the relationship between the SMI and the estimation-theoretic metric, i.e., MSE. Finally, we define the CMI-SMI and CMI-MSE regions to investigate the boundary of communication-sensing performance and the trade-off between the two functionalities.
\subsection{Communication Performance Characterization}
The objective of communication is to recover data information contained in the transmitted signal $\mathbf{X}$ from the received signal $\mathbf{Y}_{\mathrm{c}}$ of the communication Rx. To this end, the communication channel $\mathbf{H}_{\mathrm{c}}$ is usually assumed to be known by the communication Rx so that the data information can be successfully decoded. Therefore, we define the CMI obtained during one CPI as 
\begin{align}
I_{\mathrm{c}}=I\left( \mathbf{X};\mathbf{Y}_{\mathrm{c}}\!\:|\!\:\mathbf{H}_{\mathrm{c}} \right) ,
\end{align}
where $I\left( \mathbf{X};\mathbf{Y}_{\mathrm{c}}\!\:|\!\:\mathbf{H}_{\mathrm{c}} \right) $ denotes the mutual information between $\mathbf{Y}_{\mathrm{c}}$ and $\mathbf{X}$ conditioned on $\mathbf{H}_{\mathrm{c}}$.
\par Let $\mathbf{x}\left( i \right) $, $\mathbf{y}_{\mathrm{c}}\left( i \right) $, and $\mathbf{n}_{\mathrm{c}}\left( i \right) $ denote the $i$-th ($i=1,\cdots ,2BN_{\mathrm{CPI}}$) row of $\mathbf{X}$, $\mathbf{Y}_{\mathrm{c}}$, and $\mathbf{N}_{\mathrm{c}}$, respectively. We consider that each $\mathbf{x}\left( i \right) $/$\mathbf{y}_{\mathrm{c}}\left( i \right) $ in $\mathbf{X}$/$\mathbf{Y}_{\mathrm{c}}$ is independently and identically distributed (i.i.d.). Then, by omitting the index $i$, the CMI can be calculated as
\begin{align}\label{ComMI0}
I_{\mathrm{c}}&=2BN_{\mathrm{CPI}}I\left( \mathbf{x};\mathbf{y}_{\mathrm{c}}\,|\,\mathbf{H}_{\mathrm{c}} \right) 
\\
&=2BN_{\mathrm{CPI}}\left( H\left( \mathbf{y}_{\mathrm{c}}\,|\,\mathbf{H}_{\mathrm{c}} \right) -H\left( \mathbf{y}_{\mathrm{c}}\,|\,\mathbf{x},\mathbf{H}_{\mathrm{c}} \right) \right) \notag
\\
&=2BN_{\mathrm{CPI}}\left( H\left( \mathbf{y}_{\mathrm{c}}\,|\,\mathbf{H}_{\mathrm{c}} \right) -H\left( \mathbf{n}_{\mathrm{c}} \right) \right) \notag
\\
&=BN_{\mathrm{CPI}}I_{\mathrm{c},\mathrm{RE}},\notag
\end{align}
where $I_{\mathrm{c},\mathrm{RE}}\triangleq 2\left( H\left( \mathbf{y}_{\mathrm{c}}\,|\,\mathbf{H}_{\mathrm{c}} \right) -H\left( \mathbf{n}_{\mathrm{c}} \right) \right) $ denotes the CMI of one resource element (RE), which is defined as the time-frequency resource that occupies unit bandwidth and spans one symbol.

\subsection{Sensing Performance Characterization}
The objective of sensing is to extract the sensing parameters of interest $\mathbf{s}$ from the target echo signals $\mathbf{Y}_{\mathrm{s}}$, given the transmitted signals $\mathbf{X}$. Therefore, we define the SMI obtained during one CPI as 
\begin{align}
I_{\mathrm{s}}=I(\mathbf{s};\mathbf{Y}_{\mathrm{s}}\,|\mathbf{X}).
\end{align}
\par To explicitly characterize the SMI, let $\mathbf{h}_{\mathrm{s}}\triangleq \mathrm{vec}\left( \mathbf{H}_{\mathrm{s}} \right) $, and assume that $\mathbf{h}_{\mathrm{s}}$ follows zero-mean Gaussian distributed with an invertible statistical covariance matrix $\mathbf{R}_{\text{H}}$, i.e., $\mathbf{h}_{\text{s}}\thicksim \mathcal{C}\mathcal{N}\left( \mathbf{0},\mathbf{R}_{\text{H}} \right) $. Then, let $\bar{\mathbf{y}}_{\mathrm{s}}=\mathrm{vec}\left( \mathbf{Y}_{\mathrm{s}} \right) $ and $\bar{\mathbf{n}}_{\mathrm{s}}=\mathrm{vec}\left( \mathbf{N}_{\mathrm{s}} \right) $, and the received echo signal at the sensing Rx can be rewritten as
\begin{align}
\bar{\mathbf{y}}_{\mathrm{s}}=\mathcal{X} \!\:\!\:\mathbf{h}_{\mathrm{s}}+\bar{\mathbf{n}}_{\mathrm{s}},
\end{align}
where $\mathcal{X} =\mathbf{X}\otimes \mathbf{I}_{M_{\mathrm{s}}}\in \mathbb{C} ^{2BN_{\mathrm{CPI}}M_{\mathrm{s}}\times M_{\mathrm{s}}N}$ is a auxiliary martix.
\par As such, the SMI can be rewritten as
\begin{align}
I_{\mathrm{s}}&=I(\mathbf{s};\bar{\mathbf{y}}_{\mathrm{s}}\,|\,\mathcal{X} )
=H\left( \bar{\mathbf{y}}_{\mathrm{s}}\!\:|\!\:\mathcal{X} \right) -H\left( \bar{\mathbf{y}}_{\mathrm{s}}\!\:|\mathbf{s},\mathcal{X} \right) .
\end{align}
For the first term $H\left( \bar{\mathbf{y}}_{\mathrm{s}}\!\:|\!\:\mathcal{X} \right) $, it can be easily verified that $\bar{\mathbf{y}}_{\mathrm{s}}$ conditioned on $\mathcal{X}$ is Gaussian distributed with mean $\mathbf{0}$ and covariance $ \mathcal{X} \mathbf{R}_{\mathrm{H}}\mathcal{X} ^H+\sigma _{\mathrm{ns}}^{2}\mathbf{I}_{2BN_{\mathrm{CPI}}M_{\mathrm{s}}}$, i.e.,
\begin{align}
\bar{\mathbf{y}}_{\mathrm{s}}|\mathcal{X} \sim \mathcal{C} \mathcal{N} \left( 0,\mathcal{X} \mathbf{R}_{\mathrm{H}}\mathcal{X} ^H+\sigma _{\mathrm{ns}}^{2}\mathbf{I}_{2BN_{\mathrm{CPI}}M_{\mathrm{s}}} \right) .
\end{align}
Therefore, the conditional entropy $H\left( \bar{\mathbf{y}}_{\mathrm{s}}\!\:|\!\:\mathcal{X} \right) $ can be calculated as
\begin{align}
H\left( \bar{\mathbf{y}}_{\mathrm{s}}\!\:|\!\:\mathcal{X} \right) =\log \left( \det \left( \mathcal{X} \mathbf{R}_{\mathrm{H}}\mathcal{X} ^H+\sigma _{\mathrm{ns}}^{2}\mathbf{I}_{2BN_{\mathrm{CPI}}M_{\mathrm{s}}} \right) \right) ,
\end{align}
and the SMI can be written as 
\begin{align}\label{SensingMI_general}
I_{\mathrm{s}}=&\log \left( \det \left( \mathcal{X} \mathbf{R}_{\mathrm{H}}\mathcal{X} ^H+\sigma _{\mathrm{ns}}^{2}\mathbf{I}_{2BN_{\mathrm{CPI}}M_{\mathrm{s}}} \right) \right) \notag
\\
&-H\left( \bar{\mathbf{y}}_{\mathrm{s}}\!\:|\mathbf{s},\mathcal{X} \right) .
\end{align}
\par For the second term $H\left( \bar{\mathbf{y}}_{\mathrm{s}}\!\:|\mathbf{s},\mathcal{X} \right)$, it relies on the correlation between $\mathbf{s}$ and $\mathbf{h}_{\mathrm{s}}$. As the correlation between $\mathbf{s}$ and $\mathbf{h}_{\mathrm{s}}$ becomes stronger, the conditional entropy $H\left( \bar{\mathbf{y}}_{\mathrm{s}}\!\:|\mathbf{s},\mathcal{X} \right)$ decreases, and consequently the SMI gets larger. 

\par To establish a bridge between the proposed SMI and the traditional estimation-theoretic metric, we denote $\varepsilon$ as the minimum average MSE of all sensing parameters to be estimated. Then, combining the detection\&estimation theory and the information theory, $\varepsilon$ can be provided by the following Proposition.
\begin{proposition}\label{proposition1}
With SMI being $I_{\mathrm{s}}$ and the auto-covariance matrix of $\mathbf{s}$ being $\mathbf{R}_{\mathrm{s}}$, the MSE $\varepsilon$ is given by
\begin{align}\label{SensingMSE}
\varepsilon =2^{\frac{1}{K}\left( \log\det\mathrm{(}\mathbf{R}_{\mathrm{s}})-I_{\mathrm{s}} \right)}.
\end{align}
\end{proposition}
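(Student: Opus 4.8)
The plan is to relate the SMI $I_{\mathrm{s}}=I(\mathbf{s};\bar{\mathbf{y}}_{\mathrm{s}}\,|\,\mathcal{X})$ to the estimation error by writing the mutual information as the gap between the prior and posterior differential entropies of $\mathbf{s}$, and then using the maximum-entropy property of the Gaussian to convert the posterior entropy into the log-determinant of the minimum estimation-error covariance. First I would write
\begin{align}
I_{\mathrm{s}}=H\!\left(\mathbf{s}\,|\,\mathcal{X}\right)-H\!\left(\mathbf{s}\,|\,\bar{\mathbf{y}}_{\mathrm{s}},\mathcal{X}\right)=H\!\left(\mathbf{s}\right)-H\!\left(\mathbf{s}\,|\,\bar{\mathbf{y}}_{\mathrm{s}},\mathcal{X}\right),
\end{align}
where the second equality holds because the sensing parameters $\mathbf{s}$ are statistically independent of the transmitted waveform $\mathcal{X}$, so conditioning on $\mathcal{X}$ leaves the prior entropy of $\mathbf{s}$ unchanged.

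Next I would evaluate the two entropies under the same Gaussian entropy convention already used earlier for $H(\bar{\mathbf{y}}_{\mathrm{s}}\,|\,\mathcal{X})$. Since $\mathbf{s}$ is zero-mean Gaussian with auto-covariance $\mathbf{R}_{\mathrm{s}}$, the prior term is $H(\mathbf{s})=\log\det(\pi e\,\mathbf{R}_{\mathrm{s}})$. For the posterior term I would introduce the conditional-mean (hence minimum-MSE) estimator $\hat{\mathbf{s}}=\mathbb{E}\{\mathbf{s}\,|\,\bar{\mathbf{y}}_{\mathrm{s}},\mathcal{X}\}$ and its error-covariance matrix $\mathbf{\Sigma}=\mathbb{E}\{(\mathbf{s}-\hat{\mathbf{s}})(\mathbf{s}-\hat{\mathbf{s}})^{\mathrm{H}}\}$. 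Because $\hat{\mathbf{s}}$ is a deterministic function of $(\bar{\mathbf{y}}_{\mathrm{s}},\mathcal{X})$, translation invariance of differential entropy gives $H(\mathbf{s}\,|\,\bar{\mathbf{y}}_{\mathrm{s}},\mathcal{X})=H(\mathbf{s}-\hat{\mathbf{s}}\,|\,\bar{\mathbf{y}}_{\mathrm{s}},\mathcal{X})$, and, taking the error to be conditionally Gaussian, the maximum-entropy identity yields $H(\mathbf{s}-\hat{\mathbf{s}}\,|\,\bar{\mathbf{y}}_{\mathrm{s}},\mathcal{X})=\log\det(\pi e\,\mathbf{\Sigma})$. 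Substituting both entropies into $I_{\mathrm{s}}$ cancels the $\pi e$ constants and produces
\begin{align}\label{ratio}
I_{\mathrm{s}}=\log\det\mathbf{R}_{\mathrm{s}}-\log\det\mathbf{\Sigma}.
\end{align}
It then remains to identify $\varepsilon$ with the error-covariance determinant: interpreting the minimum average MSE as the geometric mean of the error eigenvalues, $\varepsilon^{K}=\det\mathbf{\Sigma}$ (equivalently, an isotropic error covariance $\mathbf{\Sigma}=\varepsilon\mathbf{I}_{K}$), I would solve \eqref{ratio} for $\det\mathbf{\Sigma}=\det(\mathbf{R}_{\mathrm{s}})\,2^{-I_{\mathrm{s}}}$ and take the $K$-th root to obtain $\varepsilon=2^{\frac{1}{K}(\log\det\mathbf{R}_{\mathrm{s}}-I_{\mathrm{s}})}$, which is exactly \eqref{SensingMSE}.

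The step I expect to be the main obstacle is equating the posterior differential entropy with $\log\det(\pi e\,\mathbf{\Sigma})$. Because $\mathbf{s}$ enters the echo model through the nonlinear channel map $\mathbf{H}_{\mathrm{s}}(\mathbf{s})$, the true posterior is generally non-Gaussian, so the maximum-entropy property only guarantees the inequality $H(\mathbf{s}\,|\,\bar{\mathbf{y}}_{\mathrm{s}},\mathcal{X})\le\log\det(\pi e\,\mathbf{\Sigma})$. Promoting this to the equality required by the Proposition needs either a Gaussian (or local high-SNR linearization) assumption on the estimation error, or a definition of $\varepsilon$ as the entropy-matched error that attains the bound; I would make this modeling assumption explicit. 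A secondary, purely bookkeeping concern is to fix the logarithm base and the real-versus-complex entropy normalization consistently, so that the dimensional constants cancel exactly and no stray factor of $\pi e$ or of two survives in the final expression.
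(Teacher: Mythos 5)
Your proposal is correct and lands on the same formula, but it is organized differently from the paper's proof, and the differences are worth spelling out. The paper does not decompose $I_{\mathrm{s}}$ through the posterior entropy as you do; instead it starts from the data processing inequality $I_{\mathrm{s}} = I(\mathbf{s};\mathbf{Y}_{\mathrm{s}}|\mathbf{X}) \ge I(\mathbf{s};\hat{\mathbf{s}}|\mathbf{X})$ for a generic unbiased estimator $\hat{\mathbf{s}}$, assumes the error $\boldsymbol{\epsilon}=\mathbf{s}-\hat{\mathbf{s}}$ is Gaussian with \emph{independent} components, so that $\log\det(\mathbf{R}_{\boldsymbol{\epsilon}})=\sum_{k}\log\mathbb{E}\{\epsilon_k^2\}$, and then applies the AM--GM inequality to pass from the product of per-component MSEs to their arithmetic mean. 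This yields a chain of inequalities $\frac{1}{K}\sum_k \mathbb{E}\{\epsilon_k^2\} \ge 2^{\frac{1}{K}(\log\det(\mathbf{R}_{\mathrm{s}})-I_{\mathrm{s}})}$, and $\varepsilon$ is then \emph{defined} as the minimum, attained when no information is lost in processing and all per-component MSEs are equal. Your route replaces both inequality steps with their equality cases up front: working with the conditional-mean estimator makes the data-processing step tight automatically, and your identification $\varepsilon^{K}=\det\mathbf{\Sigma}$ (isotropic error) is exactly the equality case of the paper's AM--GM step applied to the eigenvalues rather than to the diagonal entries under an independence assumption. What your version buys is a cleaner statement of where the formula can fail -- you correctly isolate the conditional Gaussianity of the error as the crux, which is precisely the assumption the paper imposes verbatim (``consider that the estimation error is Gaussian distributed''), and your fallback of defining $\varepsilon$ as the bound-attaining error is in fact what the paper does. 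What the paper's version buys is that it is honest about the result being a lower bound over all estimators rather than an identity, which matches the inequality form (\ref{average MSE}) and (\ref{MSE}) used later; if you wanted your argument to support those later uses, you would need to keep the inequality $H(\mathbf{s}\,|\,\bar{\mathbf{y}}_{\mathrm{s}},\mathcal{X})\le\log\det(\pi e\,\mathbf{\Sigma})$ explicit rather than assuming it holds with equality. Your bookkeeping worry about real-versus-complex entropy constants is also well placed: the paper mixes $\mathbf{s}\in\mathbb{R}^{K\times 1}$ with a complex Gaussian error model, but the additive constants cancel in the difference, exactly as in your derivation.
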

\begin{proof}
\par According to the data processing inequality feature of information theory, it readily follows that 
\begin{align}\label{inequality1}
I_{\mathrm{s}}&=I\left( \mathbf{s};\mathbf{Y}_{\mathrm{s}}\,|\mathbf{X} \right) \notag
\\
&\ge I\left( \mathbf{s};\hat{\mathbf{s}}\,|\,\mathbf{X} \right) \notag
\\
&=H\left( \mathbf{s}\,|\,\mathbf{X} \right) -H\left( \mathbf{s}\,|\,\hat{\mathbf{s}},\mathbf{X} \right) ,
\end{align}
where $H\left( \mathbf{s}\,|\,\mathbf{X} \right) $ denotes the conditional entropy of $\mathbf{s}$ given $\mathbf{X}$ and $H\left( \mathbf{s}\,|\,\hat{\mathbf{s}},\mathbf{X} \right) $ denotes the conditional entropy of $\mathbf{s}$ given $\mathbf{X}$ and the estimated $\mathbf{\hat{s}}$. 
We denote the estimation error as $\mathbf{\boldsymbol{\epsilon } }=\mathbf{s}-\mathbf{\hat{s}}$, and consider that the estimation error is Gaussian distributed with an invertible statistical covariance matrix $\mathbf{R}_{\boldsymbol{\epsilon }}$, i.e., 
$\mathbf{\boldsymbol{\epsilon }}\thicksim\mathcal{C}\mathcal{N}\left(\mathbf{0},\mathbf{R}_{\boldsymbol{\epsilon }} \right) $. Then, we have
\begin{align}
H\left( \mathbf{s}\,|\,\hat{\mathbf{s}},\mathbf{X} \right) &=H\left( \mathbf{\boldsymbol{\epsilon } }\,|\,\mathbf{X} \right) =K\log 2\pi e+\log\det\mathrm{(}\mathbf{R}_{\boldsymbol{\epsilon }}),
\\
H\left( \mathbf{s}\,|\,\mathbf{X} \right) &=K\log 2\pi e+\log\det\mathrm{(}\mathbf{R}_{\mathrm{s}}),
\end{align}
where $\mathbf{R}_{\mathrm{s}}$ denotes the auto-covariance matrix of $\mathbf{s}$. 
\par We assume that elements in $\mathbf{\boldsymbol{\epsilon }}$ are independent with each other and the estimator is unbiased. As such, (\ref{inequality1}) can be rewritten as
\begin{align}
I\left( \mathbf{s};\mathbf{Y}_{\mathrm{s}}\,|\mathbf{X} \right)  &\ge H\left( \mathbf{s}\,|\,\mathbf{X} \right) -H\left( \mathbf{\boldsymbol{\epsilon } }\,|\,\mathbf{X} \right) 
\\
&=\log\det\mathrm{(}\mathbf{R}_{\mathrm{s}})-\log\det\mathrm{(}\mathbf{R}_{\boldsymbol{\epsilon }})\notag
\\
&=\log\det\mathrm{(}\mathbf{R}_{\mathrm{s}})-\sum_{k=1}^{K}{\log \mathbb{E} \{\epsilon _{k}^{2}\}},\notag
\end{align}
and $\mathbb{E} \{\epsilon _{k}^{2}\}$ is equal to the MSE of the $k$-th element of $\mathbf{s}$. The above equation can also be written as
\begin{align}
\frac{1}{K}\left( \log\det\mathrm{(}\mathbf{R}_{\mathrm{s}})-I(\mathbf{s};\mathbf{Y}_{\mathrm{s}}\,|\mathbf{X}) \right) &\le \log \left( \prod_{k=1}^K{\mathbb{E} \{\epsilon _{k}^{2}\}} \right) ^{\frac{1}{K}}\notag
\\
&\!\!\!\!\!\!\le \log \left( \frac{1}{K}\sum_{k=1}^K{\mathbb{E} \{\epsilon _{k}^{2}\}} \right). 
\end{align}
\par Hence, the average MSE of all sensing parameters to be estimated (i.e., $\frac{1}{K}\sum_{k=1}^K{\mathbb{E} \{\epsilon _{k}^{2}\}}$) satisfies
\begin{align}\label{MSE}
\frac{1}{K}\sum_{k=1}^K{\mathbb{E} \{\epsilon _{k}^{2}\}}\ge 2^{\frac{1}{K}\left( \log\det\mathrm{(}\mathbf{R}_{\mathrm{s}})-I(\mathbf{s};\mathbf{Y}_{\mathrm{s}}\,|\mathbf{X}) \right)},
\end{align}
where the equality holds if and only if no information is lost during data processing and the MSE of all sensing parameters are the same, at which point $\frac{1}{K}\sum_{k=1}^K{\mathbb{E} \{\epsilon _{k}^{2}\}}$ reaches its minimum value, which is defined as the MSE.
\end{proof}
\par Proposition~\ref{proposition1} indicates that acquiring more SMI can lower the MSE. Specifically, the MSE decreases exponentially as the SMI increases.

\subsection{Communication-Sensing Performance Region}
We propose the communication-sensing performance region to investigate the boundary of communication-sensing performance and reveal the trade-off between the two functionalities. 
\par First, let $\mathcal{U} _{\mathrm{ISAC}}$ denote the available REs for the ISAC system during each CPI, and let $\mathcal{U} _{\mathrm{c}}$ and $\mathcal{U} _{\mathrm{s}}$ respectively denote the REs allocated to communication and sensing functionalities, satisfying $\mathcal{U} _{\mathrm{c}}\cup \mathcal{U} _{\mathrm{s}}=\mathcal{U} _{\mathrm{ISAC}}$. Then, given $\mathcal{U} _{\mathrm{ISAC}}$, we define the CMI-SMI region and the CMI-MSE region as the closure of the set of all achievable pairs $\left( I_{\mathrm{c}},I_{\mathrm{s}} \right) $ and $\left( I_{\mathrm{c}},\varepsilon  \right) $, respectively, which are given by
\begin{align}
&\mathcal{C} _{CMI-SMI}=\left\{ \left( I_{\mathrm{c}}\left( \mathcal{U} _{\mathrm{c}} \right) ,I_{\mathrm{s}}\left( \mathcal{U} _{\mathrm{s}} \right) \right) | \,\mathcal{U} _{\mathrm{c}}\cup \mathcal{U} _{\mathrm{s}}=\mathcal{U} _{\mathrm{ISAC}} \right\} ,
\\
&\mathcal{C} _{CMI-MSE}=\left\{ \left( I_{\mathrm{c}}\left( \mathcal{U} _{\mathrm{c}} \right) ,\varepsilon \left( \mathcal{U} _{\mathrm{s}} \right) \right) | \,\mathcal{U} _{\mathrm{c}}\cup \mathcal{U} _{\mathrm{s}}=\mathcal{U} _{\mathrm{ISAC}} \right\} .
\end{align}
\par For communication, (\ref{ComMI0}) shows that the CMI is linearly positively correlated to the time-frequency resources. For sensing, intuitively, allocating more time-frequency resources will increase the SMI and decrease the MSE. Hence, there exists a trade-off in time-frequency resource allocation, which is in accordance with the typical CMI-SMI and CMI-MSE regions as illustrated in Fig.~\ref{CI-SI} and Fig.~\ref{CI-SM}.

\begin{figure}[htb]
  \centering
  \subfigure[CMI-SMI region.]
  {
  \label{CI-SI}
  \includegraphics[width=2.8in]{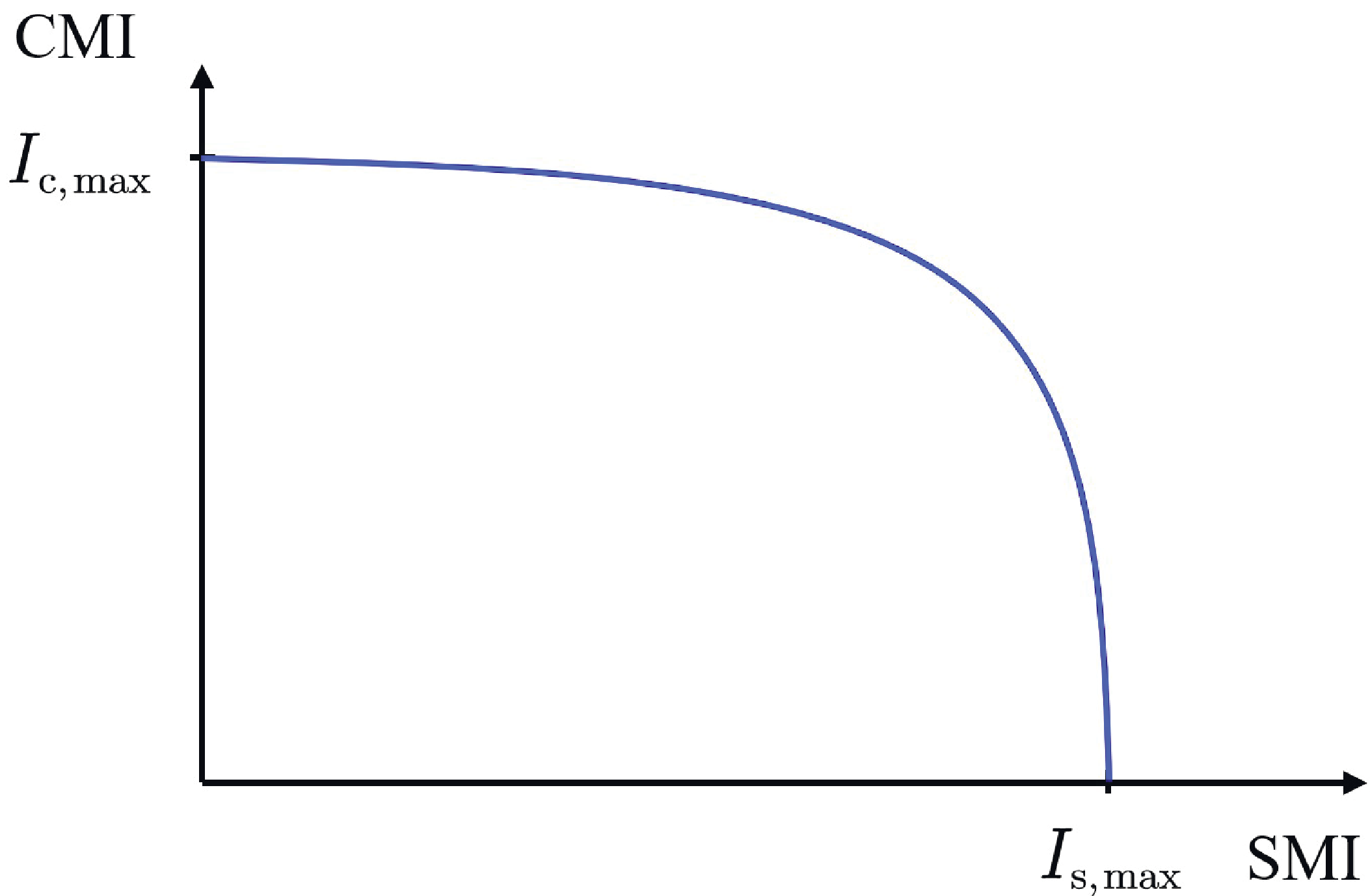}
  }
  ~\subfigure[CMI-MSE region.]
  {
  \label{CI-SM}
  \includegraphics[width=2.8in]{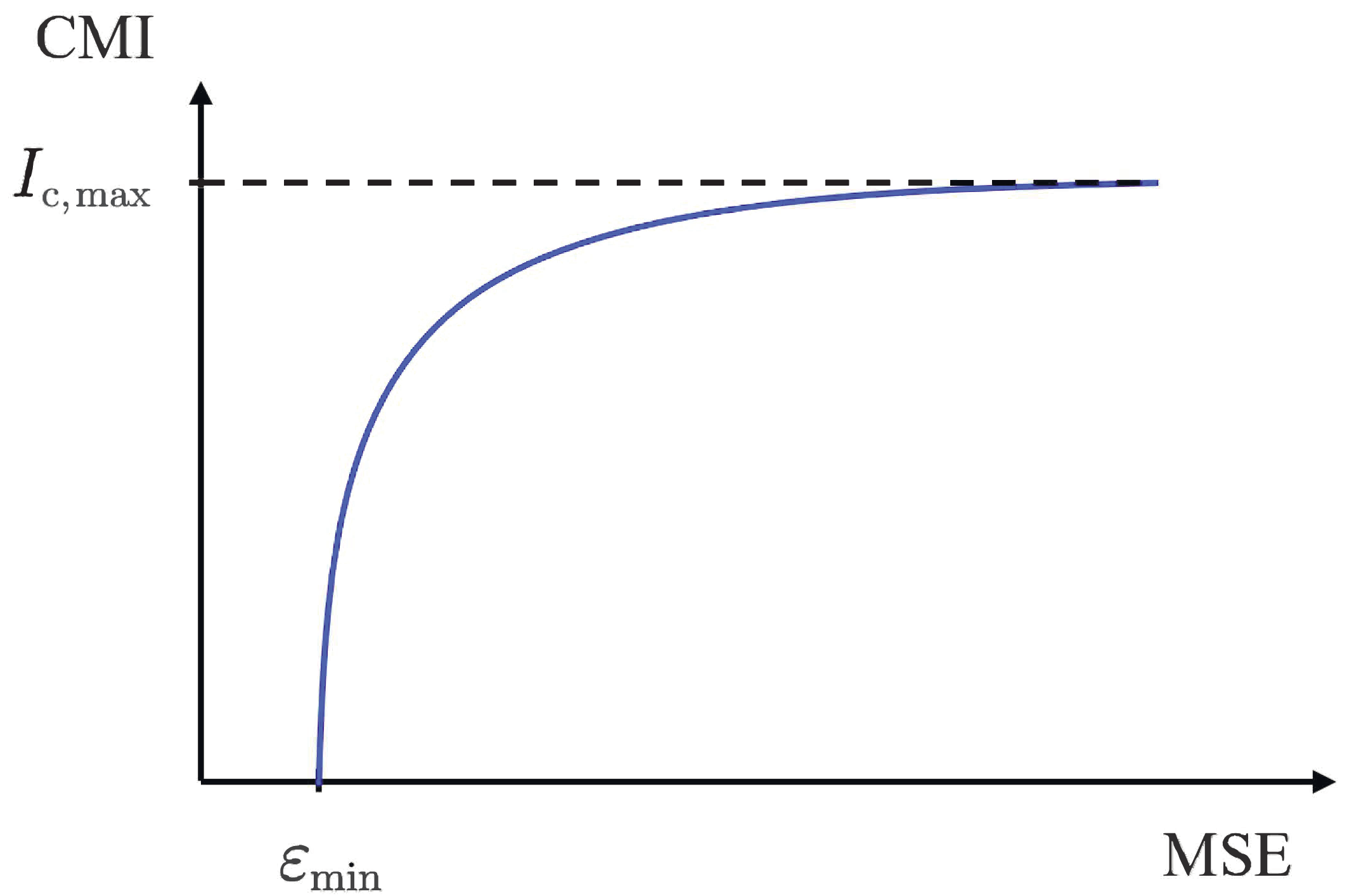}
  }
  \caption{Illustration of the communication-sensing performance region.}
\end{figure}

\section{Special Cases and Discussion} \label{section4}
In this section, we consider the special case where sensing parameters to be estimated are partial elements of the sensing
channel. Specifically, we first analyze the SMI and MSE for sensing channel estimation. Then, we investigate the ISAC waveform design, demonstrating that the sensing functionality prefers constant-modulus waveform with low correlation, while the communication functionality prefers Gaussian waveform. Next, we characterize the CMI-SMI and CMI-MSE regions, revealing the impacts of resource allocation on the communication-sensing trade-off. Specifically, when the time-frequency resources allocated to communication functionality are doubled, the amount of obtained communication information also doubles. In contrast, doubling these resources for sensing functionality leads to a $50\%$ reduction in the MSE.
\par We consider that the sensing parameters $\mathbf{s}$ to be estimated are partial elements of the sensing channel, i.e., $\mathbf{s}\subseteq \mathbf{h}_{\mathrm{s}}$,
and denote the remaining parameters of $ \mathbf{h}_{\mathrm{s}}$ excluding $\mathbf{s}$ as $\mathbf{r}\in \mathbb{C} ^{\left( NM_{\mathrm{s}}-K \right) \times 1}$. In the following, to characterize the SMI in this scenario, we focus on the calculation of the $H\left( \bar{\mathbf{y}}_{\mathrm{s}}\!\:|\mathbf{s},\mathcal{X} \right) $ term in (\ref{SensingMI_general}).
\par First, it can be easily verified that the remaining parameters $\mathbf{r}$ given sensing parameters $\mathbf{s}$ follow a conditional Gaussian distribution with mean $\mathbf{0}$ and conditional covariance matrix $\mathbf{R}_{\mathrm{r}|\mathrm{s}}$, i.e., 
\begin{align}
\mathbf{r}|\mathbf{s}\thicksim \mathcal{C} \mathcal{N} \left( \mathbf{0},\mathbf{R}_{\mathrm{r}|\mathrm{s}} \right) ,
\end{align}
where the conditional covariance matrix $\mathbf{R}_{\mathrm{r}|\mathrm{s}}$ is given by
\begin{align}
\mathbf{R}_{\mathrm{r}|\mathrm{s}}=\mathbf{R}_{\mathrm{r}}-\mathbf{R}_{\mathrm{rs}}\mathbf{R}_{\mathrm{s}}^{-1}\mathbf{R}_{\mathrm{sr}},
\end{align}
where $\mathbf{R}_{\mathrm{r}}$ denotes the auto-covariance matrix of $\mathbf{r}$, $\mathbf{R}_{\mathrm{rs}}$ and $\mathbf{R}_{\mathrm{sr}}$ are the cross-covariance matrices between $\mathbf{s}$ and $\mathbf{r}$. Let
$\left[ \mathbf{r} \right] _{k_{\mathrm{o}}}=\left[ \mathbf{h}_{\mathrm{s}} \right] _{\psi \left( k_{\mathrm{o}} \right)},k_{\mathrm{o}}=1,\cdots ,\left( NM_s-K \right) $, where $\psi \left( k_{\mathrm{o}} \right) $ is a mapping function. Then, the channel vector $\mathbf{h}_{\mathrm{s}}$ given sensing parameters $\mathbf{s}$ follows a conditional Gaussian distribution with mean $\mathbf{0}$ and conditional covariance matrix $\mathbf{R}_{\mathbf{h}_{\mathrm{s}}|\mathbf{s}}$, i.e.,
\begin{align}
\mathbf{h}_{\mathrm{s}}|\mathbf{s}\thicksim \mathcal{C} \mathcal{N} \left(\mathbf{0} ,\mathbf{R}_{\mathbf{h}_{\mathrm{s}}|\mathbf{s}} \right) ,
\end{align}
where $\left[ \mathbf{R}_{\mathbf{h}_{\mathrm{s}}|\mathbf{s}} \right] _{\psi (i),\psi (j)}\!=\!\left[ \mathbf{R}_{\mathrm{r}|\mathrm{s}} \right] _{i,j},i,j=1,\cdots ,\left( NM_s\!-\!K \right) $, and other elements in $\mathbf{R}_{\mathbf{h}_{\mathrm{s}}|\mathbf{s}}$ are $0$.  As such, we have
\begin{align}
\left( \bar{\mathbf{y}}_{\mathrm{s}}|\mathbf{s},\mathcal{X} \right) \thicksim \mathcal{C} \mathcal{N} \left( \mathbf{0},\mathcal{X} \mathbf{R}_{\mathbf{h}_{\mathrm{s}}|\mathbf{s}}\mathcal{X} ^H+\sigma _{\mathrm{ns}}^{2}\mathbf{I}_{2BN_{\mathrm{CPI}}M_{\mathrm{s}}} \right) .
\end{align}
\par Based on the above, the SMI can be calculated as
\begin{align}\label{SensingMI_eg1}
I_{\mathrm{s}}=&\log \left( \det \left( \mathcal{X} \mathbf{R}_{\mathrm{H}}\mathcal{X} ^H+\sigma _{\mathrm{ns}}^{2}\mathbf{I}_{2BN_{\mathrm{CPI}}M_{\mathrm{s}}} \right) \right) \notag
\\
&-\log \left( \det \left( \mathcal{X} \mathbf{R}_{\mathbf{h}_{\mathrm{s}}|\mathbf{s}}\mathcal{X} ^H+\sigma _{\mathrm{ns}}^{2}\mathbf{I}_{2BN_{\mathrm{CPI}}M_{\mathrm{s}}} \right) \right) ,
\end{align}
and the average MSE (\ref{SensingMSE}) can be calculated as (\ref{average MSE}) at the top of this page.
\begin{figure*}[ht]
\begin{equation}\label{average MSE}
 \begin{gathered}
  \begin{aligned}
\varepsilon\ge 2^{\frac{1}{K}\left( \log\det\mathrm{(}\mathbf{R}_{\mathrm{s}})-\log \left( \det \left( \mathcal{X} \mathbf{R}_{\mathrm{H}}\mathcal{X} ^H+\sigma _{\mathrm{ns}}^{2}\mathbf{I}_{2BN_{\mathrm{CPI}}M_{\mathrm{s}}} \right) \right) +\log \left( \det \left( \mathcal{X} \mathbf{R}_{\mathbf{h}_{\mathrm{s}}|\mathbf{s}}\mathcal{X} ^H+\sigma _{\mathrm{ns}}^{2}\mathbf{I}_{2BN_{\mathrm{CPI}}M_{\mathrm{s}}} \right) \right) \right)}.
 \end{aligned}
 \end{gathered}
\end{equation}
\hrulefill
\end{figure*}
\par More specially, when $\mathbf{s}=\mathbf{h}_{\mathrm{s}}$, the conditional covariance matrix $\mathbf{R}_{\mathbf{h}_{\mathrm{s}}|\mathbf{s}}$ becomes a null matrix, and thus the SMI in (\ref{SensingMI_eg1}) becomes
\begin{align}\label{SensingMI_eg2}
I_{\mathrm{s}}&=\log \left( \frac{\det \left( \mathcal{X} \mathbf{R}_{\mathrm{H}}\mathcal{X} ^H+\sigma _{\mathrm{ns}}^{2}\mathbf{I}_{2BN_{\mathrm{CPI}}M_{\mathrm{s}}} \right)}{\det \left( \sigma _{\mathrm{ns}}^{2}\mathbf{I}_{2BN_{\mathrm{CPI}}M_{\mathrm{s}}} \right)} \right) 
\\
&=\log \left( \det \left( \sigma _{\mathrm{ns}}^{-2}\mathcal{X} \mathbf{R}_{\mathrm{H}}\mathcal{X} ^H+\mathbf{I}_{2BN_{\mathrm{CPI}}M_{\mathrm{s}}} \right) \right). \notag
\end{align}
Based on Proposition~\ref{proposition1}, the MSE (\ref{SensingMSE}) can be calculated as
\begin{align}
\varepsilon\ge 2^{\frac{1}{K}\left( \log\det\mathrm{(}\mathbf{R}_{\mathrm{s}})-\log \left( \det \left( \sigma _{\mathrm{ns}}^{-2}\mathcal{X} \mathbf{R}_{\mathrm{H}}\mathcal{X} ^H+\mathbf{I}_{2BN_{\mathrm{CPI}}M_{\mathrm{s}}} \right) \right) \right)}.
\end{align}

\subsection{Waveform Design in ISAC system}
\begin{proposition}\label{proposition2}
In the design of ISAC waveforms, communication and sensing functionalities exhibit opposite requirements in terms of waveform amplitude, while they converge in their demand for waveform correlation. In terms of waveform amplitude, the communication functionality prefers a more random amplitude to convey more information, while the sensing functionality prefers a constant-modulus waveform to ensure a stable estimation for sensing parameters. In terms of waveform correlation, both two functionalities prefer a low-correlation waveform with random phases, which can not only improve communication efficiency but also attain more independent measurements of sensing parameters. 
\end{proposition}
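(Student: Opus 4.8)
The plan is to prove the three assertions separately, in each case pinning down exactly which feature of the waveform $\mathbf{X}$ the relevant metric depends on. \textbf{Communication amplitude.} Starting from the per-RE expression $I_{\mathrm{c},\mathrm{RE}}=2\left(H(\mathbf{y}_{\mathrm{c}}|\mathbf{H}_{\mathrm{c}})-H(\mathbf{n}_{\mathrm{c}})\right)$ in (\ref{ComMI0}), the noise entropy $H(\mathbf{n}_{\mathrm{c}})$ is fixed, so maximizing the CMI reduces to maximizing the output entropy $H(\mathbf{y}_{\mathrm{c}}|\mathbf{H}_{\mathrm{c}})$. Under a fixed power (second-moment) constraint, the maximum-entropy principle forces $\mathbf{y}_{\mathrm{c}}$ to be Gaussian, and through the linear map $\mathbf{y}_{\mathrm{c}}=\mathbf{x}\mathbf{H}_{\mathrm{c}}+\mathbf{n}_{\mathrm{c}}$ this forces the input $\mathbf{x}$ to be Gaussian; since a complex Gaussian has Rayleigh (hence random) amplitude, communication prefers random-amplitude waveforms.

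\textbf{Sensing amplitude.} I would next apply the Kronecker identity $\mathcal{X}^H\mathcal{X}=(\mathbf{X}^H\mathbf{X})\otimes\mathbf{I}_{M_{\mathrm{s}}}$ together with Sylvester's determinant identity $\det(\mathbf{I}+\mathbf{A}\mathbf{B})=\det(\mathbf{I}+\mathbf{B}\mathbf{A})$ to (\ref{SensingMI_eg2}), recasting the SMI purely as a function of the Gram matrix $\mathbf{X}^H\mathbf{X}$. The decisive observation is that, unlike the CMI, the SMI is blind to the input distribution and depends only on this deterministic Gram matrix. Stable estimation requires the diagonal of $\mathbf{X}^H\mathbf{X}$—the power carried by each resource—to stay uniform across realizations: a constant-modulus waveform fixes this diagonal deterministically, whereas a random-amplitude waveform lets it fluctuate, worsening the conditioning of the Fisher information and hence the MSE $\varepsilon$ of (\ref{SensingMSE}). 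This yields the opposite amplitude verdict for sensing.

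\textbf{Shared preference for low correlation.} For communication, with a Gaussian input of covariance $\mathbf{Q}$ and $\mathrm{tr}(\mathbf{Q})=P$, the mutual information equals $\log\det(\mathbf{I}+\sigma_{\mathrm{nc}}^{-2}\mathbf{H}_{\mathrm{c}}^H\mathbf{Q}\mathbf{H}_{\mathrm{c}})$; by concavity of $\log\det$ and a symmetrization argument for an isotropic (or unknown) channel, the optimum is $\mathbf{Q}=\tfrac{P}{N}\mathbf{I}_N$, i.e.\ mutually uncorrelated streams. For sensing, specializing (\ref{SensingMI_eg2}) to isotropic $\mathbf{R}_{\mathrm{H}}$ reduces the SMI to $M_{\mathrm{s}}\log\det(\mathbf{I}+c\,\mathbf{X}^H\mathbf{X})$, and maximizing this concave objective under $\mathrm{tr}(\mathbf{X}^H\mathbf{X})=P$ again equalizes all eigenvalues, forcing $\mathbf{X}^H\mathbf{X}=\tfrac{P}{N}\mathbf{I}_N$. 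A constant-modulus waveform with random phases realizes precisely this orthogonal Gram matrix, simultaneously giving the richest communication multiplexing and the most independent sensing measurements.

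\textbf{Main obstacle.} The subtle step is the sensing amplitude claim: since the SMI and the MSE depend only on $\mathbf{X}^H\mathbf{X}$, the constant-modulus preference cannot come from maximizing the average SMI, which is amplitude-blind. The argument must instead be cast in terms of estimation stability—controlling the fluctuation and conditioning of the Fisher information across waveform realizations—so as to cleanly separate the Gram-matrix role (setting the achievable SMI and favoring low correlation) from the amplitude-distribution role (governing robustness and favoring constant modulus). Reconciling these opposite amplitude verdicts while keeping the correlation verdict common to both functionalities is where the reasoning needs the most care.
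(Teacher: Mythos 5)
Your sensing-amplitude argument contains a genuine gap, and it is precisely at the step you flag as the ``main obstacle.'' You assert that the constant-modulus preference ``cannot come from maximizing the average SMI, which is amplitude-blind,'' and you therefore fall back on an estimation-stability/Fisher-information-conditioning argument that you never actually carry out. But the premise is false: the SMI is the \emph{conditional} mutual information $I(\mathbf{s};\mathbf{Y}_{\mathrm{s}}\,|\,\mathbf{X})$, i.e.\ an average over the waveform distribution, and the Gram matrix $\check{\mathbf{X}}=\mathbf{X}^H\mathbf{X}$ is itself a random quantity when the amplitude is random. For $N=1$ one has $\check{\mathrm{X}}=\sum_{i=1}^{2BN_{\mathrm{CPI}}}{\left| \left[ \mathbf{X} \right] _i \right|^2}$, and by Jensen's inequality (strict concavity of $\log$),
\begin{align}
\mathbb{E} _{\mathbf{X}}\left\{ \log \left( \beta _{\mathrm{s}}\check{\mathrm{X}}+1 \right) \right\} \le \log \left( \beta _{\mathrm{s}}\mathbb{E} _{\mathbf{X}}\left\{ \check{\mathrm{X}} \right\} +1 \right) ,
\end{align}
with equality if and only if $\check{\mathrm{X}}$ is (almost surely) constant. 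Under a fixed average power, the only way to make this sum of squared amplitudes deterministic is a constant-modulus waveform. So the average SMI is \emph{not} amplitude-blind --- under an average-power constraint it is strictly reduced by amplitude fluctuations --- and this Jensen argument is exactly how the paper derives the constant-modulus verdict. Having rejected that route, your proposal leaves the sensing-amplitude claim resting on an unexecuted sketch (``worsening the conditioning of the Fisher information''): no inequality is proved, no link from diagonal fluctuations of $\mathbf{X}^H\mathbf{X}$ to the MSE in (\ref{SensingMSE}) is established, so that half of the proposition is unproven in your write-up.

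The remaining parts are essentially sound. Your communication-amplitude argument (maximum-entropy Gaussian input under a power constraint, hence Rayleigh amplitude) is the same as the paper's, which simply cites the standard result. For the shared low-correlation preference, the paper handles $N>1$ via Hadamard's inequality --- the determinant of $\beta _{\mathrm{s}}( \check{\mathbf{X}}\otimes \mathbf{I}_{M_{\mathrm{s}}}) +\mathbf{I}_{M_{\mathrm{s}}N}$ is at most the product of its diagonal entries, with equality iff $\check{\mathbf{X}}$ is diagonal, forcing i.i.d.\ zero-mean entries of $\mathbf{X}$ --- whereas you equalize eigenvalues of $\mathbf{X}^H\mathbf{X}$ under a trace constraint by concavity of $\log\det$; these reach the same conclusion and are interchangeable. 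To repair your proof, simply reinstate the Jensen step above for the amplitude claim and keep your correlation argument.
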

\begin{proof}
To simplify the analysis, we consider $\mathbf{s}=\mathbf{h}_{\mathrm{s}}$ and $\mathbf{R}_{\mathrm{H}}=\alpha _{\mathrm{H}_{\mathrm{s}}}^{2}\mathbf{I}_{M_{\mathrm{s}}N}$, where $\alpha _{\mathrm{H}_{\mathrm{s}}}^{2}$ denotes the channel gain. Let $\beta _{\mathrm{s}}\triangleq \alpha _{\mathrm{H}_{\mathrm{s}}}^{2}/\sigma _{\mathrm{ns}}^{2}$, and the SMI (\ref{SensingMI_eg2}) can be rewritten as
\begin{align}
I_{\mathrm{s}}&\overset{\left( a \right)}{=}\log \left( \det \left( \beta _{\mathrm{s}}\mathcal{X} ^H\mathcal{X} +\mathbf{I}_{M_{\mathrm{s}}N} \right) \right) 
\\
&=\log \left( \det \left( \beta _{\mathrm{s}}\left( \mathbf{X}^H\mathbf{X}\otimes \mathbf{I}_{M_{\mathrm{s}}} \right) +\mathbf{I}_{M_{\mathrm{s}}N} \right) \right) \notag
\\
&\triangleq \log \left( \det \left( \beta _{\mathrm{s}}\left( \check{\mathbf{X}}\otimes \mathbf{I}_{M_{\mathrm{s}}} \right) +\mathbf{I}_{M_{\mathrm{s}}N} \right) \right) ,\notag
\end{align}
where $\left( a \right) $ holds due to
\begin{align}
\det \left( \mathbf{I}_{\mathrm{rA}}+\mathbf{AB} \right) =\det \left( \mathbf{I}_{\mathrm{rB}}+\mathbf{BA} \right),
\end{align}
and $\check{\mathbf{X}}\triangleq \mathbf{X}^H\mathbf{X}\in \mathbb{C} ^{N\times N}$.
\par In the following, we will analyze the characteristics of optimal sensing waveform in two cases.
\begin{itemize}
    \item $N=1$: In this case, the average SMI is given by
\begin{align}
\mathbb{E} _{\mathbf{X}}\left\{ I\left( \mathbf{s};\bar{\mathbf{y}}_{\mathrm{s}}\,|\,\mathbf{X} \right) \right\} &=\mathbb{E} _{\mathbf{X}}\left\{ \log \left( \det \left( \left( \beta _{\mathrm{s}}\check{\mathrm{X}}+1 \right) \mathbf{I}_{M_{\mathrm{s}}} \right) \right) \right\} \notag
\\
&=M_{\mathrm{s}}\mathbb{E} _{\mathbf{X}}\left\{ \log \left( \beta _{\mathrm{s}}\check{\mathrm{X}}+1 \right) \right\} \notag
\\
&\overset{\left( b \right)}{\le}M_{\mathrm{s}}\log \left( \beta _{\mathrm{s}}\mathbb{E} _{\mathbf{X}}\left\{ \check{\mathrm{X}} \right\} +1 \right) ,
\end{align}
where (b) follows the Jensen inequality, and the equality holds if and only if $\mathbb{E} _{\mathbf{X}}\left\{ \check{\mathrm{X}} \right\} =\check{\mathrm{X}}$. Since $\check{\mathrm{X}}=\mathbf{X}^H\mathbf{X}=\sum_{i=1}^{2BN_{\mathrm{CPI}}}{\left| \left[ \mathbf{X} \right] _i \right|^2}$, the equality of (b) holds when $\left| \left[ \mathbf{X} \right] _i \right|$ is a constant. This indicates that the maximum value of the average SMI in the case of $N=1$ will be achieved when the waveform is constant-modulus.

\item $N>1$: In this case, the average SMI is given by
\begin{align}
&\mathbb{E} _{\mathbf{X}}\left\{ I\left( \mathbf{s};\bar{\mathbf{y}}_{\mathrm{s}}\,|\,\mathbf{X} \right) \right\} =
\\
&\mathbb{E} _{\mathbf{X}}\left\{ \log \left( \det \left( \beta _{\mathrm{s}}\left( \check{\mathbf{X}}\otimes \mathbf{I}_{M_{\mathrm{s}}} \right) +\mathbf{I}_{M_{\mathrm{s}}N} \right) \right) \right\} .\notag
\end{align}
According to \cite{YangMIMOradar}, for a positive semi-definite Hermitian matrix $\mathbf{A}$, its determinant is less than or equal to the product of its main diagonal elements, and the equation holds if and only if $\mathbf{A}$ is a diagonal matrix. Hence, the maximum value of $\left| \beta _{\mathrm{s}}\left( \check{\mathbf{X}}\otimes \mathbf{I}_{M_{\mathrm{s}}} \right) +\mathbf{I}_{M_{\mathrm{s}}N} \right|$ will be attained when $\beta _{\mathrm{s}}\left( \check{\mathbf{X}}\otimes \mathbf{I}_{M_{\mathrm{s}}} \right) +\mathbf{I}_{M_{\mathrm{s}}N}$ is diagonal, which indicates that $\check{\mathbf{X}}$ should be a diagonal matrix. Since $\left[ \check{\mathbf{X}} \right] _{m,n}=\sum_{i=1}^{2BN_{\mathrm{CPI}}}{\left[ \mathbf{X} \right] _{i,m}^{H}\left[ \mathbf{X} \right] _{i,n}}$, in order to make $\check{\mathbf{X}}$ approximate the diagonal matrix, each element of $\mathbf{X}$ should be i.i.d. with zero mean. 
\end{itemize}
Hence, the sensing functionality prefers a constant-modulus waveform with low correlation. For communication purposes, under the constraints of given mean and variance, it is widely recognized that the most effective communication waveform conforms to a Gaussian distribution \cite{cover1999elements}.
This indicates that the design of communication and sensing waveforms entails a paradoxical balance, exhibiting both conflict and consistency. For sensing functionality, on the one hand, the waveform should have determinacy to ensure a stable estimation for sensing parameters. On the other hand, to acquire more information about the sensing parameters, the waveform should have a low correlation to attain more independent measurements of sensing parameters. 
\end{proof}

\par Based on Proposition~\ref{proposition2}, the ISAC waveform can be designed to feature low correlation, a random phase, and a deterministic-random adjustable amplitude. It is worth noting that there will exist a deterministic-random trade-off in the design of the ISAC waveform amplitude. The more constant the waveform amplitude is, the more stable parameter estimation can be acquired, thereby yielding enhanced sensing performance. Conversely, when the waveform amplitude is closer to the Rayleigh distribution, the ISAC waveform will be more similar to the Gaussian waveform, which has a stronger ability to convey information, thereby achieving better communication performance. 

\subsection{Time-Frequency-Spatial Resource Allocation in ISAC Systems}
In this subsection, we will investigate the resource allocation of the ISAC system, and begin by giving the SMI and the MSE of the ISAC system in the following Lemma.
\begin{lemma}\label{lemma1}
The SMI and the MSE of the ISAC system can be approximately given by
\begin{align}
&I_{\mathrm{s}}=K\log \left( \frac{2BN_{\mathrm{CPI}}P_t}{\sigma _{\mathrm{ns}}^{2}} \right) +\log \left( \det \left( \mathbf{R}_{\mathrm{s}} \right) \right) ,
\\
&\varepsilon = \frac{\sigma _{\mathrm{ns}}^{2}}{2BN_{\mathrm{CPI}}P_t}.
\end{align}
\end{lemma}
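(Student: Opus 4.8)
The plan is to specialize the general sensing expression (\ref{SensingMI_eg1}) to the optimal waveform of Proposition~\ref{proposition2} and then pass to the high signal-to-noise ratio (SNR) limit, after which the claimed MSE follows immediately from Proposition~\ref{proposition1}. First I would collapse both determinants in (\ref{SensingMI_eg1}) from dimension $2BN_{\mathrm{CPI}}M_{\mathrm{s}}$ down to $M_{\mathrm{s}}N$ by factoring out $\sigma_{\mathrm{ns}}^{2}\mathbf{I}$ (whose contribution cancels between the two log-det terms) and applying the Sylvester identity $\det(\mathbf{I}+\mathbf{AB})=\det(\mathbf{I}+\mathbf{BA})$ used in the proof of Proposition~\ref{proposition2}, giving
\begin{align}
I_{\mathrm{s}}=&\log\det\!\left(\mathbf{I}_{M_{\mathrm{s}}N}+\sigma_{\mathrm{ns}}^{-2}\mathbf{R}_{\mathrm{H}}\mathcal{X}^{H}\mathcal{X}\right)\notag
\\
&-\log\det\!\left(\mathbf{I}_{M_{\mathrm{s}}N}+\sigma_{\mathrm{ns}}^{-2}\mathbf{R}_{\mathbf{h}_{\mathrm{s}}|\mathbf{s}}\mathcal{X}^{H}\mathcal{X}\right).\notag
\end{align}
Invoking the constant-modulus, low-correlation waveform of Proposition~\ref{proposition2}, $\check{\mathbf{X}}=\mathbf{X}^{H}\mathbf{X}=2BN_{\mathrm{CPI}}P_{t}\mathbf{I}_{N}$, so that $\mathcal{X}^{H}\mathcal{X}=\check{\mathbf{X}}\otimes\mathbf{I}_{M_{\mathrm{s}}}=2BN_{\mathrm{CPI}}P_{t}\mathbf{I}_{M_{\mathrm{s}}N}$, where $P_{t}$ is the per-RE transmit power. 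Writing $\gamma\triangleq 2BN_{\mathrm{CPI}}P_{t}/\sigma_{\mathrm{ns}}^{2}$, the two determinants reduce to $\det(\mathbf{I}_{M_{\mathrm{s}}N}+\gamma\mathbf{R}_{\mathrm{H}})$ and $\det(\mathbf{I}_{M_{\mathrm{s}}N}+\gamma\mathbf{R}_{\mathbf{h}_{\mathrm{s}}|\mathbf{s}})$.

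The second step carries out the high-SNR expansion. Since $\mathbf{R}_{\mathrm{H}}$ is full rank, $\log\det(\mathbf{I}_{M_{\mathrm{s}}N}+\gamma\mathbf{R}_{\mathrm{H}})\approx M_{\mathrm{s}}N\log\gamma+\log\det\mathbf{R}_{\mathrm{H}}$. The delicate term is the second: by construction $\mathbf{R}_{\mathbf{h}_{\mathrm{s}}|\mathbf{s}}$ is singular, with zero rows and columns at the $K$ positions occupied by $\mathbf{s}$ and the block $\mathbf{R}_{\mathrm{r}|\mathrm{s}}$ on the remaining $M_{\mathrm{s}}N-K$ coordinates. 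Hence it has exactly $K$ zero eigenvalues (each contributing $\log 1=0$) and $M_{\mathrm{s}}N-K$ nonzero eigenvalues whose product equals $\det\mathbf{R}_{\mathrm{r}|\mathrm{s}}$, so $\log\det(\mathbf{I}_{M_{\mathrm{s}}N}+\gamma\mathbf{R}_{\mathbf{h}_{\mathrm{s}}|\mathbf{s}})\approx(M_{\mathrm{s}}N-K)\log\gamma+\log\det\mathbf{R}_{\mathrm{r}|\mathrm{s}}$. Subtracting, the leading $M_{\mathrm{s}}N\log\gamma$ terms partially cancel and leave
\begin{align}
I_{\mathrm{s}}\approx K\log\gamma+\log\det\mathbf{R}_{\mathrm{H}}-\log\det\mathbf{R}_{\mathrm{r}|\mathrm{s}}.\notag
\end{align}

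Finally I would use the Schur-complement factorization of the joint covariance, $\det\mathbf{R}_{\mathrm{H}}=\det\mathbf{R}_{\mathrm{s}}\det\mathbf{R}_{\mathrm{r}|\mathrm{s}}$, a direct consequence of $\mathbf{R}_{\mathrm{r}|\mathrm{s}}=\mathbf{R}_{\mathrm{r}}-\mathbf{R}_{\mathrm{rs}}\mathbf{R}_{\mathrm{s}}^{-1}\mathbf{R}_{\mathrm{sr}}$, so that $\log\det\mathbf{R}_{\mathrm{H}}-\log\det\mathbf{R}_{\mathrm{r}|\mathrm{s}}=\log\det\mathbf{R}_{\mathrm{s}}$ and
\begin{align}
I_{\mathrm{s}}\approx K\log\!\left(\frac{2BN_{\mathrm{CPI}}P_{t}}{\sigma_{\mathrm{ns}}^{2}}\right)+\log\det\mathbf{R}_{\mathrm{s}},\notag
\end{align}
which is the first claim. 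Substituting this into the MSE formula (\ref{SensingMSE}) of Proposition~\ref{proposition1} cancels the $\log\det\mathbf{R}_{\mathrm{s}}$ terms and yields $\varepsilon=2^{-\log\gamma}=\sigma_{\mathrm{ns}}^{2}/(2BN_{\mathrm{CPI}}P_{t})$, the second claim. The main obstacle I anticipate is the high-SNR handling of the singular matrix $\mathbf{R}_{\mathbf{h}_{\mathrm{s}}|\mathbf{s}}$: one must argue carefully that it contributes exactly $K$ fewer $\log\gamma$ terms than the full-rank $\mathbf{R}_{\mathrm{H}}$ and that the residual constant is precisely $\det\mathbf{R}_{\mathrm{r}|\mathrm{s}}$, since it is this rank deficit, equal to the number of parameters actually sensed, that produces the clean $K\log\gamma$ scaling and, together with the Schur identity, the exact cancellation down to $\log\det\mathbf{R}_{\mathrm{s}}$. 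A secondary point worth stating explicitly is that the scaled-identity form of $\check{\mathbf{X}}$, and hence the disappearance of the antenna number $N$ from the final SNR, rests on the constant-modulus and orthogonality properties of Proposition~\ref{proposition2}; away from that optimal waveform the expressions hold only approximately, which is the sense in which Lemma~\ref{lemma1} is stated.
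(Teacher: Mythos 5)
Your proof is correct and takes essentially the same route as the paper's: reduce both determinants to dimension $M_{\mathrm{s}}N$ via the Sylvester identity, set $\check{\mathbf{X}}\approx 2BN_{\mathrm{CPI}}P_t\mathbf{I}_N$, exploit the rank-$K$ deficiency of $\mathbf{R}_{\mathbf{h}_{\mathrm{s}}|\mathbf{s}}$ at high SNR together with the Schur identity $\det\mathbf{R}_{\mathrm{H}}=\det\mathbf{R}_{\mathrm{s}}\det\mathbf{R}_{\mathrm{r}|\mathrm{s}}$, and then substitute into Proposition~\ref{proposition1}. The only cosmetic difference is that you phrase the high-SNR step as an eigenvalue expansion of $\det\left(\mathbf{I}+\gamma\mathbf{R}\right)$, whereas the paper equivalently neglects the small perturbation $\gamma^{-1}\mathbf{I}$ inside $\det\left(\mathbf{R}+\gamma^{-1}\mathbf{I}\right)$ blockwise; your handling of the singular conditional covariance is, if anything, slightly more explicit.
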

\begin{proof}
Based on subsection A, we assume that each element of $\mathbf{X}$ are i.i.d. with mean zero and variance $P_t$, which yields $\check{\mathbf{X}}\approx 2BN_{\mathrm{CPI}}P_t\mathbf{I}_N$. Hence, the SMI can be rewritten as
\begin{align}
I_{\mathrm{s}}=&\log \left( \det \left( \mathbf{R}_{\mathrm{H}}\left( \check{\mathbf{X}}\otimes \mathbf{I}_{M_{\mathrm{s}}} \right) +\sigma _{\mathrm{ns}}^{2}\mathbf{I}_{M_{\mathrm{s}}N} \right) \right) \notag
\\
&-\log \left( \det \left( \mathbf{R}_{\mathbf{h}_{\mathrm{s}}|\mathbf{s}}\left( \check{\mathbf{X}}\otimes \mathbf{I}_{M_{\mathrm{s}}} \right) +\sigma _{\mathrm{ns}}^{2}\mathbf{I}_{M_{\mathrm{s}}N} \right) \right) \notag
\\
\approx& \log \left( \frac{\det \left( 2BN_{\mathrm{CPI}}P_t\mathbf{R}_{\mathrm{H}}+\sigma _{\mathrm{ns}}^{2}\mathbf{I}_{M_{\mathrm{s}}N} \right)}{\det \left( 2BN_{\mathrm{CPI}}P_t\mathbf{R}_{\mathbf{h}_{\mathrm{s}}|\mathbf{s}}+\sigma _{\mathrm{ns}}^{2}\mathbf{I}_{M_{\mathrm{s}}N} \right)} \right) \notag
\\
=&\log \left( \frac{\det \left( \mathbf{R}_{\mathrm{H}}+\frac{\sigma _{\mathrm{ns}}^{2}}{2BN_{\mathrm{CPI}}P_t}\mathbf{I}_{M_{\mathrm{s}}N} \right)}{\det \left( \mathbf{R}_{\mathbf{h}_{\mathrm{s}}|\mathbf{s}}+\frac{\sigma _{\mathrm{ns}}^{2}}{2BN_{\mathrm{CPI}}P_t}\mathbf{I}_{M_{\mathrm{s}}N} \right)} \right). 
\end{align}
\par Since the magnitude of $\frac{\sigma _{\mathrm{ns}}^{2}}{2BN_{\mathrm{CPI}}P_t}$ is much smaller than the magnitudes of elements in $\mathbf{R}_{\mathrm{H}}$ and $\mathbf{R}_{\mathbf{h}_{\mathrm{s}}|\mathbf{s}}$, we have
\begin{align}
&\det \left( \mathbf{R}_{\mathrm{H}}+\frac{\sigma _{\mathrm{ns}}^{2}}{2BN_{\mathrm{CPI}}P_t}\mathbf{I}_{M_{\mathrm{s}}N} \right) \notag
\\
&\qquad\approx \det \left( \left[ \begin{matrix}
	\mathbf{R}_{\mathrm{s}}&		\mathbf{R}_{\mathrm{sr}}\\
	\mathbf{R}_{\mathrm{rs}}&		\mathbf{R}_{\mathrm{r}}\\
\end{matrix} \right] \right) \notag
\\
&\qquad=\det \left( \mathbf{R}_{\mathrm{s}} \right) \det \left( \mathbf{R}_{\mathrm{r}}-\mathbf{R}_{\mathrm{rs}}\mathbf{R}_{\mathrm{s}}^{-1}\mathbf{R}_{\mathrm{sr}} \right) \notag
\\
&\qquad=\det \left( \mathbf{R}_{\mathrm{s}} \right) \det \left( \mathbf{R}_{\mathrm{r}|\mathrm{s}} \right) ,
\\
&\det \left( \mathbf{R}_{\mathbf{h}_{\mathrm{s}}|\mathbf{s}}+\frac{\sigma _{\mathrm{ns}}^{2}}{2BN_{\mathrm{CPI}}P_t}\mathbf{I} \right) \notag
\\
&\qquad\approx \det \left( \left[ \begin{matrix}
	\frac{\sigma _{\mathrm{ns}}^{2}}{2BN_{\mathrm{CPI}}P_t}\mathbf{I}&		0\\
	0&		\mathbf{R}_{\mathrm{r}|\mathrm{s}}\\
\end{matrix} \right] \right) \notag
\\
&\qquad=\left( \frac{\sigma _{\mathrm{ns}}^{2}}{2BN_{\mathrm{CPI}}P_t} \right) ^K\det \left( \mathbf{R}_{\mathrm{r}|\mathrm{s}} \right) .
\end{align}
Hence, the SMI is approximate to
\begin{align}\label{sensing_MI_conclution}
I_{\mathrm{s}}&=\log \left( \frac{\det \left( \mathbf{R}_{\mathrm{s}} \right) \det \left( \mathbf{R}_{\mathrm{r}|\mathrm{s}} \right)}{\left( \frac{\sigma _{\mathrm{ns}}^{2}}{2BN_{\mathrm{CPI}}P_t} \right) ^K\det \left( \mathbf{R}_{\mathrm{r}|\mathrm{s}} \right)} \right) 
\\
&=K\log \left( \frac{2BN_{\mathrm{CPI}}P_t}{\sigma _{\mathrm{ns}}^{2}} \right) +\log \left( \det \left( \mathbf{R}_{\mathrm{s}} \right) \right),  \notag
\end{align}
and the MSE is approximate to
\begin{align}\label{sensing_MSE_conclution}
\varepsilon &=2^{\frac{1}{K}\left( \log \left( \det \left( \mathbf{R}_{\mathrm{s}} \right) \right) -K\log \left( \frac{2BN_{\mathrm{CPI}}P_t}{\sigma _{\mathrm{ns}}^{2}} \right) -\log \left( \det \left( \mathbf{R}_{\mathrm{s}} \right) \right) \right)}
\\
&=2^{-\log \left( \frac{2BN_{\mathrm{CPI}}P_t}{\sigma _{\mathrm{ns}}^{2}} \right)}=\frac{\sigma _{\mathrm{ns}}^{2}}{2BN_{\mathrm{CPI}}P_t}.\notag
\end{align}
\end{proof}

\par Based on Lemma~\ref{lemma1}, we have the following theorem.
\begin{theorem}\label{theorem1}
The achievable CMI-SMI region and CMI-MSE region with  given ISAC resources $\mathcal{U} _{\mathrm{ISAC}}$ are respectively given by
\begin{align}
&\mathcal{C} _{CMI\!-\!SMI}=\left\{ \left( \mathcal{U} _{\mathrm{c}}I_{\mathrm{c},\mathrm{RE}},I_{\mathrm{s}}\left( \mathcal{U} _{\mathrm{s}} \right) \right) \middle| \!\:\mathcal{U} _{\mathrm{c}}\cup \mathcal{U} _{\mathrm{s}}=\mathcal{U} _{\mathrm{ISAC}} \right\} ,
\\
&\mathcal{C} _{CMI\!-\!SMI}\!=\!\left\{ \left( \mathcal{U} _{\mathrm{c}}I_{\mathrm{c},\mathrm{RE}},\frac{\sigma _{\mathrm{ns}}^{2}}{2\mathcal{U} _{\mathrm{s}}P_t} \right) \middle| \!\:\mathcal{U} _{\mathrm{c}}\cup \mathcal{U} _{\mathrm{s}}\!=\!\mathcal{U} _{\mathrm{ISAC}} \right\} ,
\end{align}
where
\begin{align}
I_{\mathrm{s}}\left( \mathcal{U} _{\mathrm{s}} \right) =K\log \left( \mathcal{U} _{\mathrm{s}} \right) \!+\!K\log \left( \frac{2P_t}{\sigma _{\mathrm{ns}}^{2}} \right) \!+\!\log \left( \det \left( \mathbf{R}_{\mathrm{s}} \right) \right) .
\end{align}
\end{theorem}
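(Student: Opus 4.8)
The plan is to obtain both regions by combining the linearity of the CMI established in~(\ref{ComMI0}) with the closed-form SMI and MSE expressions of Lemma~\ref{lemma1}, so that the proof reduces to accounting for how the allocated resource elements determine the signal dimensions entering those results. Throughout I treat $\mathcal{U}_{\mathrm{c}}$ and $\mathcal{U}_{\mathrm{s}}$ both as RE subsets and, where they appear in arithmetic expressions, as their cardinalities.

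First I would handle the communication coordinate. Equation~(\ref{ComMI0}) gives $I_{\mathrm{c}}=BN_{\mathrm{CPI}}I_{\mathrm{c},\mathrm{RE}}$, in which $BN_{\mathrm{CPI}}$ is exactly the total number of available REs, i.e.\ $\mathcal{U}_{\mathrm{ISAC}}=BN_{\mathrm{CPI}}$. Since the per-RE information contributions are i.i.d.\ and additive by the assumption preceding~(\ref{ComMI0}), restricting transmission to the subset $\mathcal{U}_{\mathrm{c}}$ scales the CMI linearly, yielding $I_{\mathrm{c}}(\mathcal{U}_{\mathrm{c}})=\mathcal{U}_{\mathrm{c}}I_{\mathrm{c},\mathrm{RE}}$. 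This fixes the first coordinate of both regions with no further work.

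Next I would specialize Lemma~\ref{lemma1} to the sensing allocation. The key bookkeeping step is to recognize that, under the sampling convention of Section~\ref{section2}, each RE (unit bandwidth, one symbol) supplies exactly two frequency-domain samples, so the sample dimension $2BN_{\mathrm{CPI}}$ pervading Lemma~\ref{lemma1} equals $2\mathcal{U}_{\mathrm{ISAC}}$. When only the $\mathcal{U}_{\mathrm{s}}$ REs carry the sensing waveform, the effective row dimension of $\mathbf{X}$ restricted to the sensing block is $2\mathcal{U}_{\mathrm{s}}$, and the i.i.d.\ zero-mean waveform assumption of Lemma~\ref{lemma1} gives $\check{\mathbf{X}}\approx 2\mathcal{U}_{\mathrm{s}}P_t\mathbf{I}_N$ on that block. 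Substituting $2BN_{\mathrm{CPI}}\mapsto 2\mathcal{U}_{\mathrm{s}}$ into the SMI and MSE formulas of Lemma~\ref{lemma1} and splitting the logarithm then yields $I_{\mathrm{s}}(\mathcal{U}_{\mathrm{s}})=K\log(\mathcal{U}_{\mathrm{s}})+K\log(2P_t/\sigma_{\mathrm{ns}}^{2})+\log(\det(\mathbf{R}_{\mathrm{s}}))$ and $\varepsilon(\mathcal{U}_{\mathrm{s}})=\sigma_{\mathrm{ns}}^{2}/(2\mathcal{U}_{\mathrm{s}}P_t)$, matching the stated sensing coordinates exactly.

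Finally I would assemble each region as the closure of the set of achievable pairs as the partition $\mathcal{U}_{\mathrm{c}}\cup\mathcal{U}_{\mathrm{s}}=\mathcal{U}_{\mathrm{ISAC}}$ ranges over all admissible allocations, directly invoking the definitions of $\mathcal{C}_{CMI-SMI}$ and $\mathcal{C}_{CMI-MSE}$ from Section~\ref{section3}. I expect the resource-to-dimension accounting in the sensing step to be the main obstacle: one must justify that partitioning $\mathcal{U}_{\mathrm{ISAC}}$ into disjoint RE subsets partitions the $2BN_{\mathrm{CPI}}$ samples into proportional blocks, and that the waveform statistics assumed in Lemma~\ref{lemma1} remain valid on the sensing block, so that $\check{\mathbf{X}}\approx 2\mathcal{U}_{\mathrm{s}}P_t\mathbf{I}_N$ indeed holds. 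Once this linear scaling of the sample count with the allocated REs is granted, the theorem follows by direct substitution.
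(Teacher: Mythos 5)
Your proposal is correct and matches the paper's (implicit) argument: the paper offers no separate proof of Theorem~\ref{theorem1}, presenting it as an immediate consequence of Lemma~\ref{lemma1} together with the per-RE linearity of the CMI in~(\ref{ComMI0}), which is exactly your substitution $BN_{\mathrm{CPI}}\mapsto\mathcal{U}_{\mathrm{s}}$ (equivalently $2BN_{\mathrm{CPI}}\mapsto 2\mathcal{U}_{\mathrm{s}}$ samples) in the SMI/MSE formulas and $I_{\mathrm{c}}(\mathcal{U}_{\mathrm{c}})=\mathcal{U}_{\mathrm{c}}I_{\mathrm{c},\mathrm{RE}}$ for the communication coordinate. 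Your write-up in fact supplies the resource-to-sample bookkeeping that the paper leaves unstated.
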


\subsubsection{Time-Frequency Resources Allocation in ISAC systems}
As indicated by Theorem~\ref{theorem1}, when the time-frequency resources allocated to the communication functionality double, the amount of the obtained communication information also doubles. When the time-frequency resources allocated to the sensing functionality double, the amount of the obtained information about the sensing parameters increases by $3K$ bits, making the sensing estimation error reduce by $50\%$.
The communication-sensing region indicates that we should make a trade-off when allocating the time-frequency resources for communication and sensing. When the time-frequency resource allocation for one functionality is sufficient, additional resources should be given to another functionality to maximize the performance improvement of the ISAC system.

\subsubsection{Spatial Resources Allocation in ISAC systems}
Lemma ~\ref{lemma1} demonstrates that, for sensing channel estimation, with given sensing parameters of interest $\mathbf{s}$, adding more spatial resources (i.e., increasing the dimension of the remaining parameters $\mathbf{r}$) will not affect the SMI and the MSE. This is because, the sensing information obtained from $\mathbf{r}$ is redundant to that obtained from $\mathbf{s}$, offering negligible extra information about sensing parameters. In this case, additional spatial resources should be allocated for communication functions. In addition, lemma ~\ref{lemma1} also suggests that the MSE is not affected by the spatial correlation between the sensing parameters. This is because, when the sensing parameters are more correlated, on the one hand, the determinant of $\mathbf{R}_{\mathrm{s}}$ becomes smaller, thereby reducing the acquired SMI. On the other hand, more spatial correlation between the sensing parameters increases the gain of joint parameter estimation, which ultimately stabilizes the MSE.

\section{Numerical Results}\label{section5}
In this section, we present numerical results to validate the proposed performance metrics, characterize the proposed communication-sensing performance region, as well as to investigate the impacts of waveforms and other system parameters on the communication-sensing performance.

\subsection{Relationship between SMI and MSE}
\par First, we investigate the relationship between SMI and MSE discussed in Proposition~\ref{proposition1}. 
\begin{figure}[htb]
  \centering
  \includegraphics[width=3.3in]{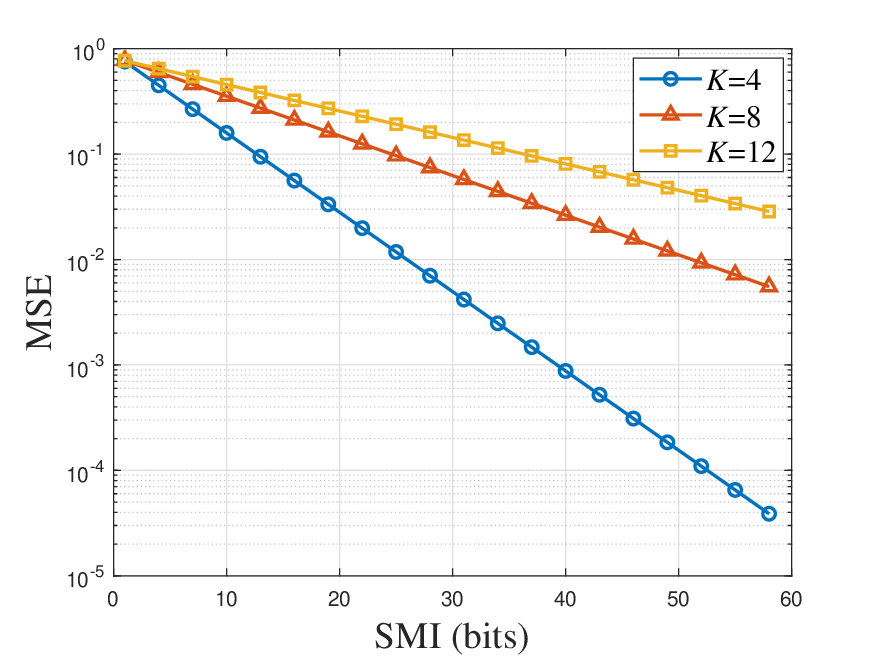}
  \caption{Relationship between SMI and MSE with various values of $K$.}
  \label{SMI_MSE_K}
\end{figure}
\par Fig.~\ref{SMI_MSE_K} presents the impact of the number of sensing parameters to be estimated on the relationship between SMI and MSE, where the cross-correlation coefficient between different sensing parameters is set as $\rho _{\mathrm{s}}=0.3$. It can be seen that the MSE decreases as the SMI increases, which corroborates the intuition that acquiring more information about the sensing parameters can improve the sensing estimation accuracy. Moreover, given SMI, the MSE increases with the number of sensing parameters. This suggests that, for estimating more sensing parameters, it is required to obtain more information about these parameters to maintain the sensing estimation accuracy. For example, when the number of sensing parameters increases from $4$ to $8$, the acquired SMI should increase proportionally from $25$ bit to $50$ bit to ensure a $10^{-2}$-level MSE.

\begin{figure}[htb]
  \centering
  \includegraphics[width=3.3in]{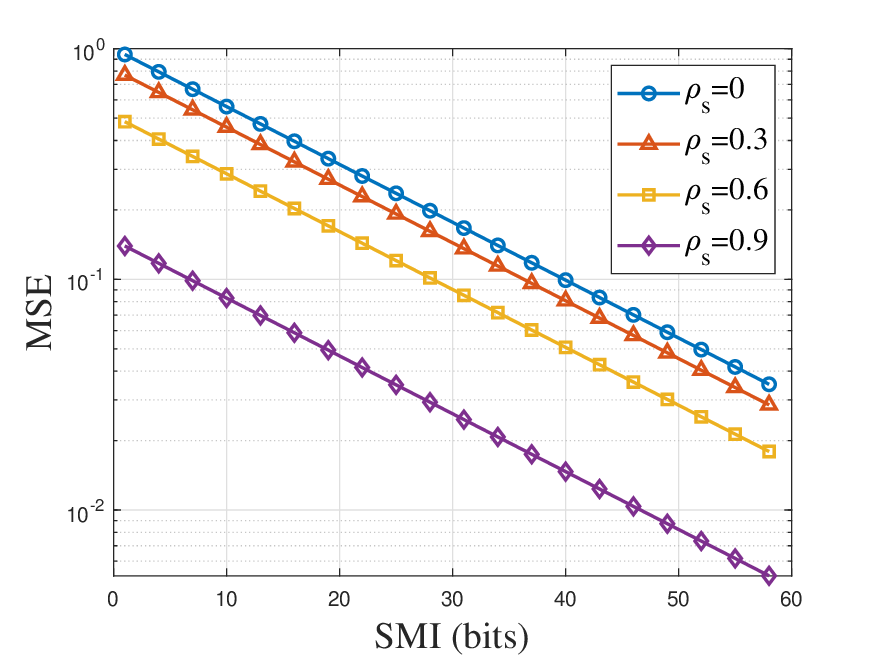}
  \caption{Relationship between SMI and MSE with various values of $\rho _{\mathrm{s}}$.}
  \label{SMI_MSE_rho}
\end{figure}
\par In Fig.~\ref{SMI_MSE_rho}, we fix the number of sensing parameters to be estimated at $K=12$, and study the impact of the parameter correlation on the relationship between SMI and MSE. We can observe that, given SMI, the MSE decreases with the parameter correlation $\rho _{\mathrm{s}}$, especially in the case of high parameter correlation. This is because, in the estimation of multiple sensing parameters, the correlation between parameters introduces redundancy in the acquired information. This redundancy in turn provides diversity gain, which can be exploited to resist the adverse effect of random errors, and ultimately improves the accuracy of sensing parameter estimation.

\subsection{Communication-Sensing Performance Region}
Then, we characterize the communication-sensing performance region, and the following settings are assumed throughout the simulations unless otherwise specified. The ISAC Tx, the sensing Rx, and the communication Rx are each equipped with a uniform linear array, with the numbers of antennas being $N=4$, $M_s=8$, and $M_c=4$, respectively. For each $N$-dimensional sample vector of the transmitted ISAC symbol, its average power is limited to be $P_T=N$ and it follows the complex Gaussian distribution $\mathcal{C} \mathcal{N} \left( 0,P_T\mathbf{I}_N \right) $. The Gaussian channel scenario is considered for both communication and sensing. For the sensing channel vector $\mathbf{h}_{\mathrm{s}}=\mathrm{vec}\left( \mathbf{H}_{\mathrm{s}} \right)$, the first $K=NM_{\mathrm{s}}/2$ elements are the sensing parameters $\mathbf{s}$ to be estimated and the remaining are the parameters $\mathbf{r}$ out of interest. The cross-correlation coefficients of $\mathbf{s}$, of $\mathbf{r}$, and between $\mathbf{s}$ and $\mathbf{r}$ are respectively set as $\rho _{\mathrm{s}}=0.3$, $\rho _{\mathrm{r}}=0.3$, and $\rho _{\mathrm{sr}}=0.2$, respectively. Moreover, the channel gain to noise ratio for communication and sensing are set as $\beta _{\mathrm{c}}=\alpha _{\mathrm{H}_{\mathrm{c}}}^{2}/\sigma _{\mathrm{nc}}^{2}=20$ dB and $\beta _{\mathrm{s}}=\alpha _{\mathrm{H}_{\mathrm{s}}}^{2}/\sigma _{\mathrm{ns}}^{2}=10 $ dB, respectively. In addition, during each CPI, the available REs for the ISAC system are set as $U_{\mathrm{ISAC}}\triangleq \mathrm{card}\left( \mathcal{U} _{\mathrm{ISAC}} \right) =10000$, satisfying $U_{\mathrm{c}}+U_{\mathrm{s}}\triangleq \mathrm{card}\left( \mathcal{U} _{\mathrm{c}} \right) +\mathrm{card}\left( \mathcal{U} _{\mathrm{s}} \right) =U_{\mathrm{ISAC}}$.

\begin{figure}[htbp]
  \centering
  \subfigure[CMI-SMI region.]  
  {
  \label{CMI_SMI_RE}
  \includegraphics[width=3.3in]{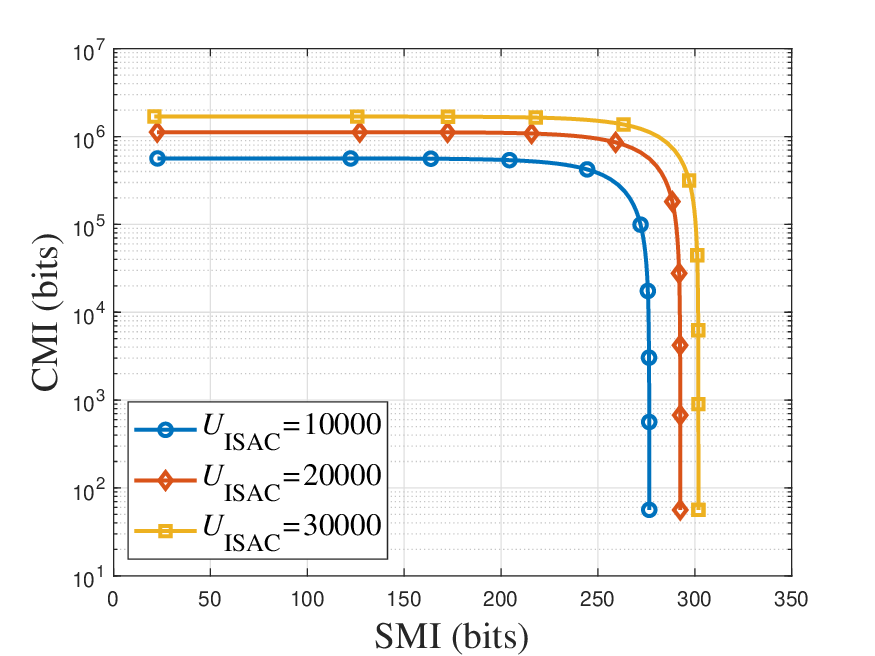}
  }
  \subfigure[CMI-MSE region.] 
  {
  \label{CMI_MSE_RE}
  \includegraphics[width=3.3in]{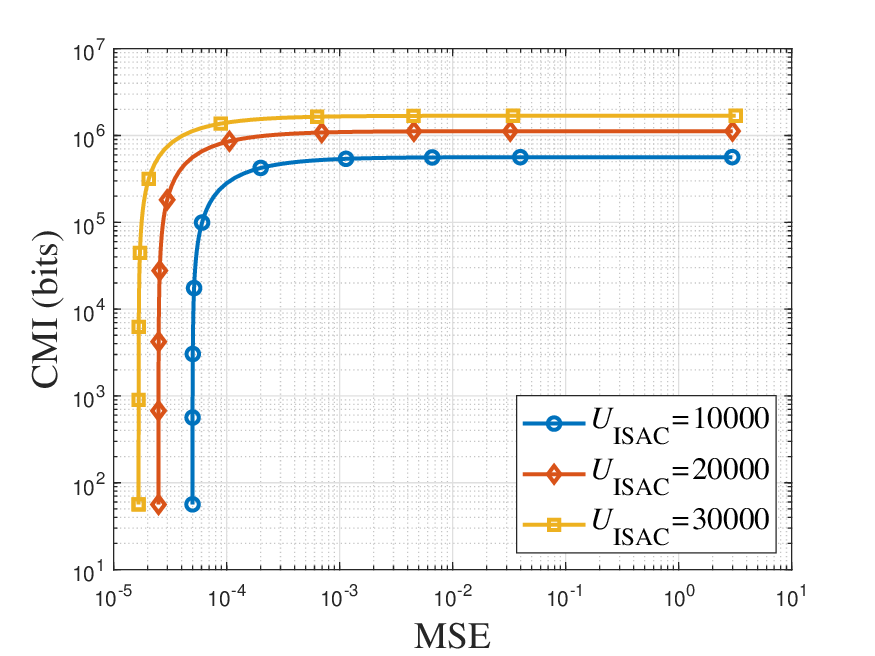}
  }
  \caption{Communication-sensing performance region with various values of $U_{\mathrm{ISAC}}$.}
  \label{C_S_RE}
\end{figure}
Fig.~\ref{CMI_SMI_RE} and Fig.~\ref{CMI_MSE_RE} characterize the CMI-SMI region and CMI-MSE region with different available REs $\mathcal{U} _{\mathrm{ISAC}}$, respectively, where each communication-sensing curve is obtained by changing the allocation of REs between communication and sensing functionalities. One can observe that, given time-frequency resources, there exists a displacement relation between communication performance and sensing performance. This relationship entails that sacrificing the performance of one facilitates enhancement in the performance of another. Moreover, the communication-sensing curve comprises three regions: trade-off region, communication saturation region, and sensing saturation region. In the trade-off region, sacrificing the performance of one can notably boost that of another, while in the communication/sensing saturation region, despite sacrificing the performance of one a lot, little performance gain of another can be obtained. This suggests that, when the time-frequency resources allocated for one is sufficient, additional resources should be allocated to another one for obtaining more performance gain, which corroborates the analysis of time-frequency resource allocation in Section.~\ref{section4}. In addition, with more REs, the communication-sensing performance region becomes larger.

\begin{figure}[htbp]
  \centering
  \subfigure[CMI-SMI region.]  
  {
  \label{CMI_SMI_rhox}
  \includegraphics[width=3.3in]{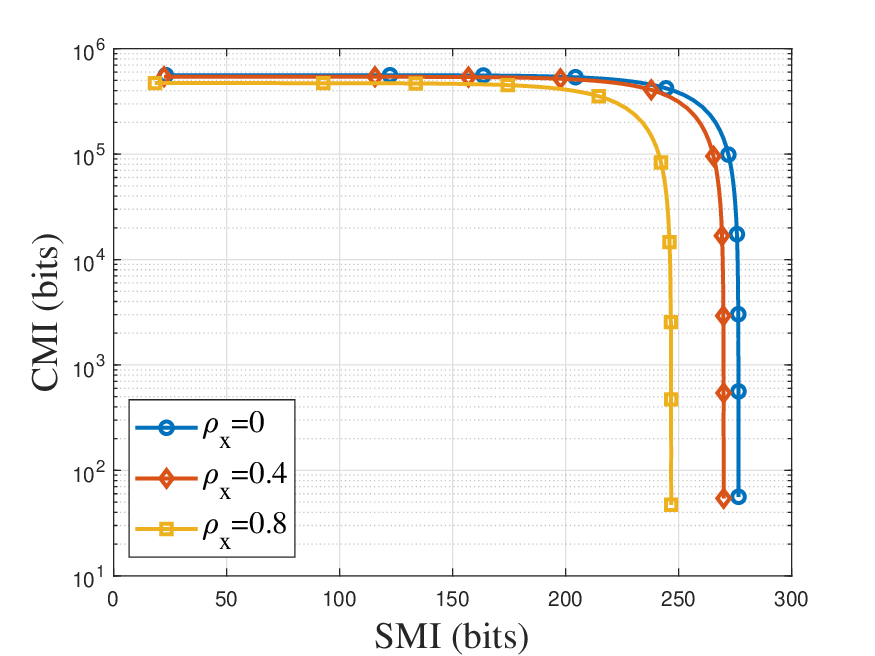}
  }
  \subfigure[CMI-MSE region.] 
  {
  \label{CMI_MSE_rhox}
  \includegraphics[width=3.3in]{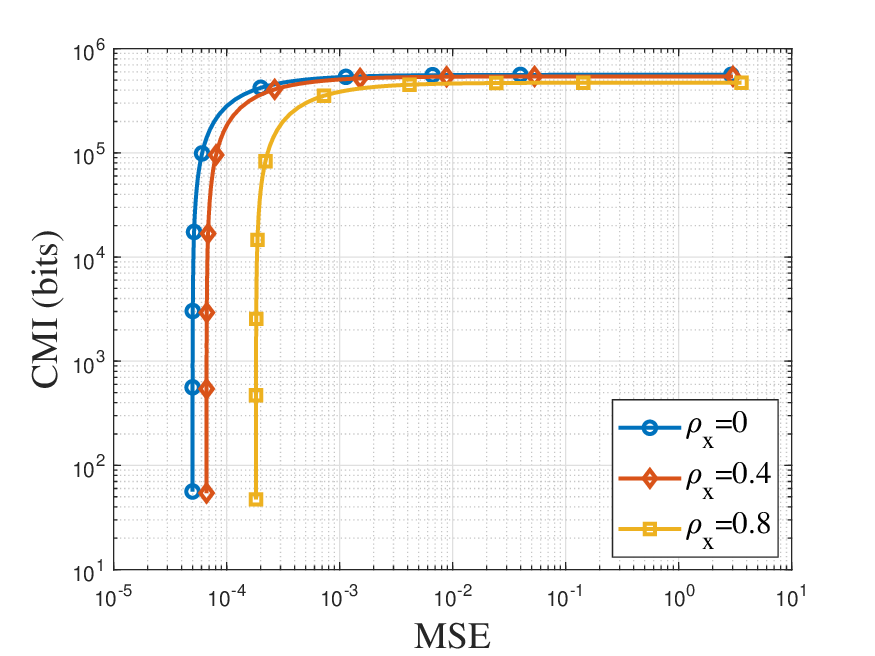}
  }
  \caption{Communication-sensing performance region with various values of $\rho _{\mathrm{x}}$.}
  \label{C_S_rhox}
\end{figure}
\par Then, we consider that the ISAC waveform is spatially correlated, and investigate the impact of the waveform correlation on the communication and sensing performance in Fig.~\ref{C_S_rhox}, where the spatial correlation coefficient of the ISAC waveform is denoted by $\rho _{\mathrm{x}}$. We see that decreasing the spatial correlation coefficient $\rho _{\mathrm{x}}$ is beneficial to both communication and sensing. This demonstrates the unified side existed in the design of the communication waveform and the sensing waveform as revealed in Proposition~\ref{proposition2}, namely, an ISAC waveform with lower correlation can not only carry more information to improve the communication efficiency, but can also acquire more independent measurements of sensing parameters to enhance the parameter estimation task.

\begin{figure}[htbp]
  \centering
  \subfigure[Communication performance.]  
  {
  \label{waveform_c}
  \includegraphics[width=3.3in]{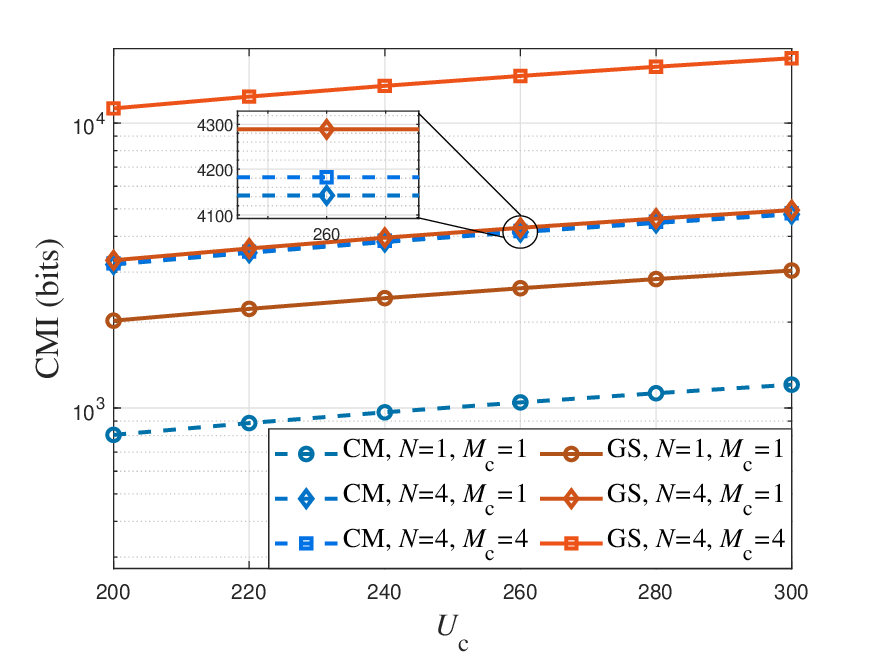}
  }
  \subfigure[Sensing performance.] 
  {
  \label{waveform_s}
  \includegraphics[width=3.3in]{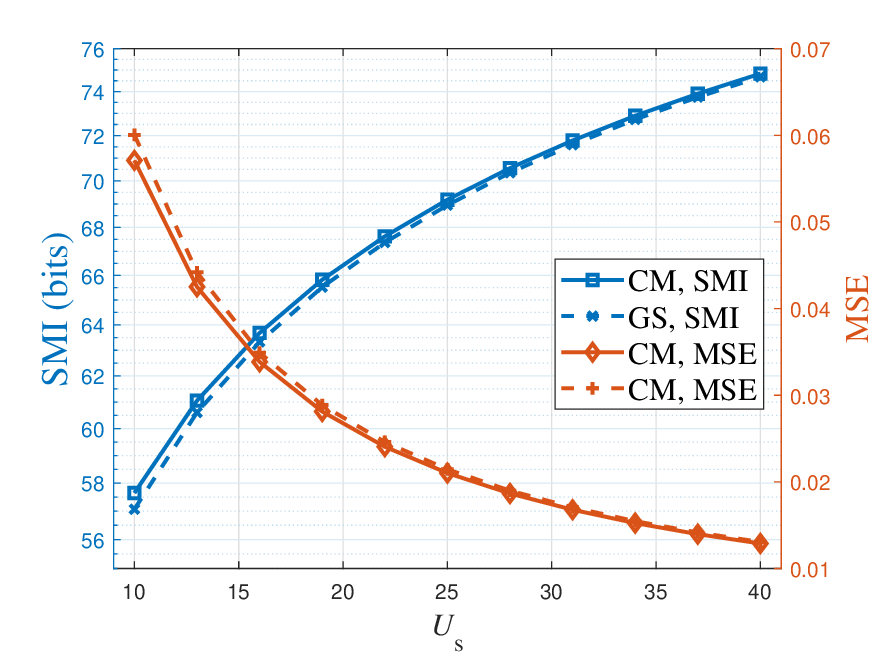}
  }
  \caption{Performance comparison of the constant-modulus waveform and Gaussian waveform.}
  \label{waveform}
\end{figure}
Next, in Fig.~\ref{waveform}, we study the impact of the probability distribution function of the waveform amplitude on communication performance and sensing performance, respectively. We compare the performance of the Gaussian waveform (denoted by GS) and the constant-modulus waveform (denoted by CM) with uniformly distributed phases. As shown in Fig.~\ref{waveform_c}, in the single-antenna communication Rx scenario (i.e., $M_{\mathrm{c}}=1$), the Gaussian waveform has better communication performance than the constant-modulus waveform when $N=1$, which originates from the fact that the Gaussian distribution is the maximum entropy distribution. However, a noteworthy phenomenon is that the communication performance gap between the two waveforms gradually vanishes as the number of antennas at the ISAC Tx $N$ increases. This is because the communication Rx with a single antenna can not distinguish the multiple streams from the ISAC Tx, and according to the central limit theorem, the sum of multiple constant-modulus signals gradually approximately follows a complex Gaussian distribution as $N$ increases. While in the multi-antenna communication Rx scenario (i.e., $M_{\mathrm{c}}=4$), the communication Rx is able to distinguish the multiple streams from the ISAC Tx, and the communication benefits of the Gaussian waveform reemerge. Moreover, for sensing performance, as shown in Fig.~\ref{waveform_s}, the Gaussian waveform performs worse than the constant-modulus waveform when fewer REs are allocated to the sensing functionality, due to the adverse effect of signal randomness as analyzed in Proposition~\ref{proposition2}. As the allocated REs increase, the sensing Rx collects more data for parameter estimation, thereby mitigating the adverse effect caused by the randomness of the waveform amplitude, which consequently diminishes the sensing performance gap between the two waveforms. These phenomena prove the contradiction in the waveform design of communication and sensing revealed in Proposition~\ref{proposition2}, namely, the communication functionality prefers a more random amplitude to carry more information, while the sensing functionality prefers a constant waveform amplitude to ensure a stable estimation of sensing parameters.

\begin{figure}[htb]
  \centering
  \includegraphics[width=3.3in]{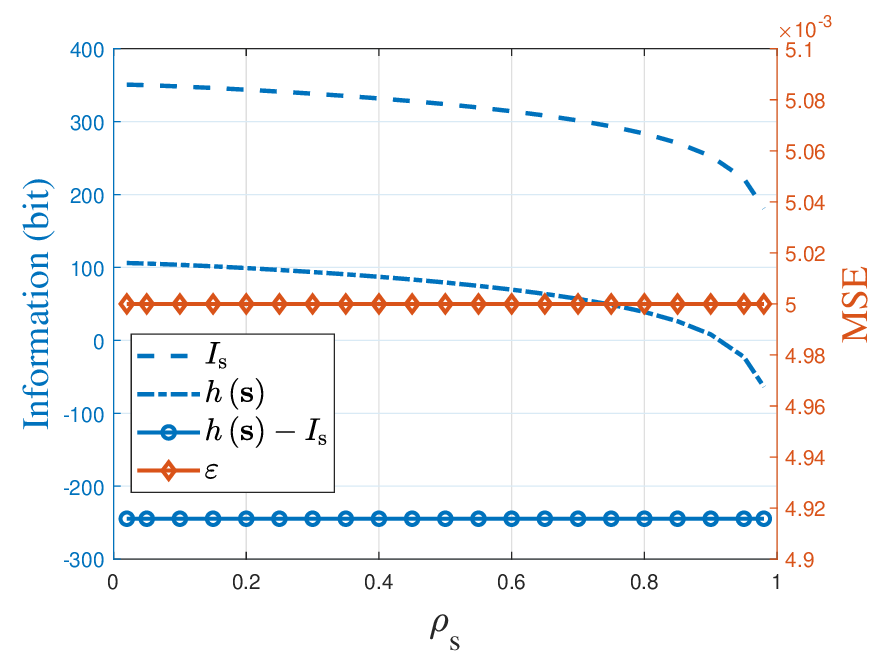}
  \caption{Sensing performance with various values of $\rho _{\mathrm{s}}$.}
  \label{Sensing_rhos}
\end{figure}

Finally, in Fig.~\ref{Sensing_rhos}, we investigate the impact of the sensing parameter correlation on the performance of sensing channel estimation based on Proposition~\ref{proposition1}, where we set $K=NM_{\mathrm{s}}$ and $\mathrm{card}\left( \mathcal{U} _{\mathrm{s}} \right) =100$. For ease of notation, we denote  $h\left( \mathbf{s} \right) =\log\det\mathrm{(}\mathbf{R}_{\mathrm{s}})$ as the entropy of the sensing parameters. We can observe that the SMI decreases with the correlation of the sensing parameters, because more correlated parameters have more information redundancy. However, as spatial correlation increases, the reduction in the entropy of the sensing parameters is the same as that in the SMI, leaving the MSE nearly unchanged. An intuitive interpretation is that, when the correlation between sensing parameters is stronger, although less SMI can be acquired, the signal processing gain of joint parameter estimation is enhanced, which ultimately stabilizes the MSE.

\section{Conclusion}\label{section6}
In this paper, we established an information model for band-limited discrete-time ISAC systems, developed communication-sensing regions for fundamental limits investigation, demonstrated the paradoxical balance in the ISAC waveform design, as well as revealed the impact of the time-frequency-spatial resource allocation on the performance trade-off. In particular, we first leveraged the information theory alongside the Nyquist sampling theorem to establish a unified information model for the band-limited continuous-time ISAC system, which incorporates both temporal, spectral, and spatial properties. Under this framework, we derived the SMI for sensing performance characterization and revealed its connection with the MSE. Then, we proposed the CMI-SMI and CMI-MSE regions to investigate the performance boundary of ISAC and the trade-off between communication and sensing. Through theoretical analysis and numerical results, two valuable insights were revealed for guiding the design of practical ISAC systems. First, for ISAC waveform design, the communication functionality prefers a random amplitude to convey more information, while the sensing functionality prefers a constant-modulus waveform to ensure a stable parameter estimation. In contrast, both functionalities prefer a low-correlation waveform with random phases, which not only improves communication efficiency but also provides more independent measurements of sensing parameters. Second, for time-frequency resource allocation, there exists a linear positive proportional relationship between the allocated time-frequency resource and the achieved communication rate/sensing MSE. Moreover, in sensing channel estimation, the MSE is not affected by the sensing parameter correlation, since this correlation provides signal processing gain in joint parameter estimation despite reducing the acquired SMI.

\bibliographystyle{IEEEtran}
\bibliography{references}{}
\end{document}